\newtheorem{theorem}{Theorem}
\newtheorem{lemma}{Lemma}
\def\threeImages#1#2#3#4#5#6#7#8#9 
\def\twoImages#1#2#3#4#5#6 
\newenvironment{remark}[1][Remark]{\begin{trivlist}
\item[\hskip \labelsep {\bfseries #1}]}{\end{trivlist}}
\newcommand{\tb}{\textbf}
\begin{document}
\title{Adaptive  Algorithm for Quantum Amplitude Estimation}
\author[1]{Yunpeng Zhao}
\author[1]{Haiyan Wang}
\author[1]{Kuai Xu}
\author[1]{Yue Wang}
\author[2]{Ji Zhu}
\author[1]{Feng Wang}
\affil[1]{School of Mathematical and Natural Sciences, Arizona State University}
\affil[2]{Department of Statistics, University of Michigan}

\renewcommand\Authands{ and }
\date{}
\maketitle
\begin{abstract}
	Quantum amplitude estimation is a key sub-routine of a number of quantum algorithms with various applications. We propose an adaptive algorithm for interval estimation of amplitudes.  The quantum part of the algorithm is  based only on Grover's algorithm.  The key ingredient is the introduction of an adjustment factor, which adjusts the amplitude of good states such that the  amplitude after the adjustment, and  the original amplitude, can be estimated without ambiguity in the subsequent step. We show with numerical studies that the proposed algorithm uses a similar number of quantum queries to achieve the same level of precision $\epsilon$ compared to state-of-the-art algorithms, but the classical part, i.e., the non-quantum part, has   substantially lower  computational complexity.  We rigorously prove that the number of oracle queries achieves $O(1/\epsilon)$, i.e., a quadratic speedup over  classical Monte Carlo sampling, and the computational complexity of the classical part achieves $O(\log(1/\epsilon))$, both up to a  double-logarithmic factor.

\end{abstract}

\section{Introduction}
Quantum computers have the potential to perform high-speed computations based on a fundamentally different manner of storing and processing data -- quantum superpositions  and unitary transformations. The reader is referred to \cite{nielsen} for a comprehensive introduction to quantum computing and \cite{wang2022quantum,Wang2022When} for quantum computing in a context of statistics and data science. A major milestone in quantum computing is  the discovery of a polynomial-time quantum algorithm for integer factorization \citep{shor1994algorithms}, which is almost exponentially faster than the most efficient known classical algorithm \citep{pomerance1996tale}. Another famous quantum algorithm is Grover's algorithm \citep{grover1996fast}, which  finds with high probability the unique input to a black box function defined on $\{0,...,N-1\}$ that gives a particular output, using $O(\sqrt{N})$ queries.  Although only achieving a quadratic speedup over a classical brute-force search, Grover's algorithm makes no  assumption on the function other than the number of the solutions (later relaxed by \cite{brassard2002quantum}), and therefore has a wide range of potential applications \citep{ambainis2004quantum,sun2014quantum,zhong2021best}. In recent years, quantum algorithms have been developed for various domains including finance \citep{hong2014monte,herman2022survey}, chemistry \citep{cao2019quantum},  optimization \citep{durr1996quantum,kochenberger2014unconstrained,wang2016quantum,hu2020quantum}, machine learning \citep{ramezani2020machine}, and high-dimensional statistics \citep{zhong2021best}, among others.

In this paper, we focus on the amplitude estimation problem introduced by \cite{brassard2002quantum}. Suppose  the basis states in a finite-dimensional complex Hilbert space\footnote{Hilbert space typically refers to an infinite-dimensional function space in mathematics. In quantum computing, finite-dimensional complex Hilbert spaces are usually considered, which are simply finite-dimensional complex inner product spaces.} are partitioned into two sets, called \textit{good} states and \textit{bad} states. Given a quantum state, the goal of amplitude estimation is to estimate the norm of the projection of the state vector on the sub-space spanned by good states. According to the basic properties of quantum mechanics, the square of the vector norm equals the probability that a good state is obtained if the quantum state is measured. Let $p$ denote this probability and $\sqrt{p}$ denote the corresponding vector norm, i.e., amplitude.
 Amplitude estimation is different from another problem -- quantum state estimation, in which the goal is to reconstruct the entire pure or mixed quantum states based upon measurements on copies of identical quantum states.   The readers are referred to \cite{gill_jrssb,gill2008conciliation,gill2013asymptotic} for statistical methods on this problem. Amplitude estimation, by contrast, focuses on a single parameter $p$. 
Amplitude estimation has various applications, e.g., in finance \citep{rebentrost2018quantum,zoufal2019quantum,woerner2019quantum,egger2020credit}, chemistry \citep{knill2007optimal,kassal2008polynomial}, machine learning \citep{wiebe2015quantum,wiebe2016quantum}, and generic tasks such as Monte
Carlo sampling \citep{montanaro2015quantum} and numerical integration \citep{montanaro2015quantum,suzuki2020amplitude}.


\cite{brassard2002quantum} formulated  amplitude estimation  as a quantum phase estimation (QPE) problem  \citep{kitaev1995quantum} and proved that  QPE-based amplitude estimation can achieve a quadratic speedup over classical Monte Carlo sampling -- that is,  the number of oracle queries achieves $O(1/\epsilon)$ where $\epsilon$ is the desired level of precision.  \cite{suzuki2020amplitude} mentioned that  QPE-based amplitude estimation involves many controlled operations, i.e., controlled Grover operators,  that can be difficult
to implement on noisy
intermediate-scale quantum (NISQ) devices. In addition, QPE-based amplitude estimation relies on quantum Fourier transform (QFT), as mentioned by \cite{aaronson2020quantum},   more commonly associated with Shor's algorithm that can achieve an exponential speedup. This raises a natural question of whether one can design an amplitude estimation algorithm, which is based only on Grover iterations and can achieve quadratic speedup. A number of Grover-based amplitude estimation algorithms have recently been proposed. \cite{suzuki2020amplitude} built a maximum likelihood estimate for the amplitude, thereafter called maximum likelihood amplitude estimation (MLAE), based on samples generated from Grover's algorithm with various numbers of iterations. The paper  provides a lower bound of the estimation error. \cite{wie2019simpler} replaced QPE by the Hadamard test in the proposed algorithm. \cite{aaronson2020quantum} proposed the first Grover-based amplitude estimation algorithm with the theoretically guaranteed quadratic speedup. However, the constants in the theoretical bound are very large and the empirical estimation error is also large for practical usage \citep{grinko2021iterative}. \cite{grinko2021iterative} proposed an iterative algorithm for quantum amplitude estimation, thereafter called IQAE, and provided a proof of the correctness of the algorithm and the quadratic speedup up to a double-logarithmic factor. \cite{grinko2021iterative} included a sub-routine FINDNEXTK to search for search for the appropriate number of Grover iterations, which can be time-consuming.
\cite{nakaji2020faster} recently proposed another Grover-based algorithm with a theoretical guarantee, but the empirical estimation error appears to be substantially larger than MLAE and IQAE (see Figure 3 of \cite{nakaji2020faster} and Figure 3 of \cite{grinko2021iterative}). 

We propose a new Grover-based algorithm for amplitude estimation, called adaptive algorithm.  The amplitude cannot be uniquely identified from the measurements if only a single circuit of Grover iterations is used \citep{suzuki2020amplitude}, which creates a unique challenge -- \textit{period ambiguity} in estimation. We design an adaptive algorithm that gradually increases the number of Grover iterations such that the confidence interval\footnote{This confidence interval is in fact for $\theta=\arcsin \sqrt{p}$. The reason of introducing this reparametrization is given in Section \ref{sec:amplitude_grover}.} in each step can be uniquely determined based on the period estimated from the previous steps. In particular, we introduce an adjustment factor which adjusts the probability of obtaining a good state, and hence the amplitude, when the interval's length does not exceed the period's length but the interval overlaps with two periods. We show that the amplitude after the adjustment, and hence  the original amplitude, can be estimated without ambiguity in the subsequent step. With this adjustment, our algorithm does not rely on a search sub-routine as in  \cite{grinko2021iterative}, which can be time-consuming for certain parameter values. Moreover, the number of total steps and the number of measurements in each step are easier to bound analytically. We therefore give a rigorous proof of the correctness of the algorithm and the quadratic speedup up to a double-logarithmic factor. Furthermore, we show with numerical studies that the proposed algorithm uses a similar number of quantum queries to achieve the same level of precision $\epsilon$ compared to MLAE and IQAE, but the classical part, i.e., the non-quantum part has substantially lower computational complexity. A simple analysis shows that the computational complexity of the classical part achieves $O(\log (1/\epsilon) \log(\log(1/\epsilon)))$.

We summarize the contributions  as follows:
\begin{itemize}
    \item We introduce a novel variant of interval estimation for quantum amplitudes based on Grover's algorithm. One of the key ingredients is an adaptive adjustment factor.
    \item The new algorithm is easier for theoretical analysis and we prove that the  number of \textit{oracle queries} achieves $O((1/\epsilon) \log(\log(1/\epsilon)))$, which is a quadratic speedup over  classical Monte Carlo sampling up to a double-logarithmic factor.
        \item The computational complexity of the \textit{classical part} of the algorithm achieves $O(\log (1/\epsilon) \log(\log(1/\epsilon)))$. We show by numerical studies that the classical part  has   substantially lower  computational costs than state-of-the-art algorithms. 
\end{itemize}

The remainder of this article is organized as follows. We give a brief background review on quantum computing and on the framework for Grover-based amplitude estimation in Section \ref{sec:prelim}. We explain the main idea of the algorithm in Section \ref{sec:idea} and present the algorithm and the theoretical analysis in Section \ref{sec:alg}. The numerical comparison with state-of-the-art algorithms is given in Section \ref{sec:numerical}. We conclude the paper by a discussion on future research problems in Section \ref{sec:conclusion}.

\section{Preliminary}\label{sec:prelim}
\subsection{Brief background review on quantum computing}
A quantum bit or qubit is the quantum version of the classic  bit. The quantum state of a qubit is represented by a linear combination, or called superposition, of two orthonormal basis states. That is,
\begin{align*}
\ket{\xi}= \begin{pmatrix}
\alpha_0 \\
\alpha_1
\end{pmatrix} = \alpha_0 \ket{0}+\alpha_1 \ket{1},
\end{align*}
where  $\ket{0}$ and $\ket{1}$ are the basis states:
\begin{align*}
    \ket{0}=\begin{pmatrix}
1 \\
0 
\end{pmatrix}, \,\, 
\ket{1}=\begin{pmatrix}
0 \\
1 
\end{pmatrix},
\end{align*}
and $\alpha_0, \alpha_1$ are complex numbers, called amplitudes, satisfying $|\alpha_0|^2 +|\alpha_1|^2=1$.
The notation $\ket{\cdot}$, called ``ket'', denotes a column vector, and $\bra{\cdot}$, called ``bra'', denotes the conjugate transpose of the corresponding $\ket{\cdot}$.

A basis state of $n$ multiple qubits has the form $\ket{x_1 x_2 ... x_n} = \ket{x_1}\otimes \ket{x_2} \otimes \cdots \otimes \ket{x_n}$, where $\otimes$ is the Kronecker product and $x_i=0$ or 1 for $i=1,...,n$. The notation $\otimes$ is usually omitted, i.e., $\ket{x_1}\ket{x_2}\cdots\ket{x_n}= \ket{x_1}\otimes \ket{x_2} \otimes \cdots \otimes \ket{x_n}$. For example, $\ket{00}$, $\ket{0}\ket{0}$ and $\ket{0}\otimes\ket{0}$ are the same. 

The state of $n$ multiple qubits is represented by a unit vector in $\mathbb{C}^{2^n}$ with the form
\begin{align*}
    \ket{\xi}=\sum_{x\in \{0,1\}^n} \alpha_{x} \ket{x}.
\end{align*}
One important feature in quantum computing is that we cannot acquire the values of the amplitudes of a quantum state directly (\cite{nielsen}, Section 1.2). Instead, we can only acquire information from a quantum state through \textit{measurement}. Specifically, $\ket{x}$ is obtained with probability $|\alpha_{x}|^2$ when $\ket{\xi}$ is measured.

A quantum state can be changed by unitary transformations. A unitary transformation on an $n$-qubit state can be represented by a $2^n \times 2^n$ unitary matrix. The design of  useful unitary transformations is the heart of quantum computing. 

\subsection{Amplitude estimation based on Grover's algorithm}\label{sec:amplitude_grover}

 The quantum amplitude estimation problem was first introduced by  \cite{brassard2002quantum}. We follow the description\footnote{The original formulation in \cite{brassard2002quantum} does not include the ancilla bit.} in \cite{suzuki2020amplitude} and \cite{grinko2021iterative}. Consider the $2^{n+1}$ basis states of $n+1$ qubits. Define the  basis states with the last qubit on $\ket{1}$ as \textit{good} states and those with the last qubit on $\ket{0}$ as \textit{bad} states.  Let $\mathcal{A}$ be a unitary transformation on $n+1$ qubits and
$\ket{\Psi}= \mathcal{A}\ket{0}_{n+1}$.
 Write $\ket{\Psi}$ as a linear combination of the  basis states:
 \begin{align*}
     \ket{\Psi} = \sum_{x \in \{0,1\}^n} \alpha_{x,1} \ket{x}\ket{1}+\sum_{x \in \{0,1\}^n} \alpha_{x,0} \ket{x}\ket{0}.
 \end{align*}
 When the last qubit $\ket{\Psi}$ of  is measured, $\ket{1}$ is obtained with probability $\sum_{x \in \{0,1\}^n} |\alpha_{x,1}|^2$ according to the basic properties of quantum computing \citep{nielsen}. Let $p$ denote this probability. The goal of amplitude estimation is to estimate $p$.

Let $\ket{\Psi_1} =(1/\sqrt{p})\sum_{x \in \{0,1\}^n} \alpha_{x,1} \ket{x}$ and  $\ket{\Psi_0}=(1/\sqrt{1-p})\sum_{x \in \{0,1\}^n} \alpha_{x,0} \ket{x}$. $\ket{\Psi}$ can be written as 
\begin{align}
 \ket{\Psi}=\sqrt{p} \ket{\Psi_1}\ket{1}+\sqrt{1-p} \ket{\Psi_0} \ket{0}. \label{A_def}
\end{align}
In the following, $\ket{\Psi_1}\ket{1}$ and $\ket{\Psi_0}\ket{0}$ are called normalized good and bad states, respectively. Note that $\ket{\Psi_1}$ and  $\ket{\Psi_0}$ are not necessarily orthogonal,  and with the ancilla bit, $\ket{\Psi_1}\ket{1}$ and $\ket{\Psi_0} \ket{0}$ are orthogonal. 

A special case of $\mathcal{A}$ corresponds to querying Boolean functions through quantum oracles. Let $f: \{0,1\}^n \rightarrow \{0,1\}$ be a Boolean function. One can query $f$ with a quantum oracle in the form of a unitary transformation $\mathcal{U}_f$ defined as
\begin{align*}
\mathcal{U}_f\ket{x}\ket{y} =\ket{x}\ket{y\oplus f(x)},
\end{align*}
where $x\in \{0,1\}^n$, $y \in \{0,1\}$, and $\oplus$ is the modulo 2 addition. The beauty of $\mathcal{U}_f$ is that it allows quantum computers to evaluate $f(x)$ for all $2^n$ values of $x$ simultaneously (\cite{nielsen}, Section 1.4.2).  Let $\mathcal{H}$ be the Hadamard transform on one qubit, that is,
\begin{align*}
\mathcal{H} = \frac{1}{\sqrt{2}} \begin{pmatrix}
1 & 1 \\
1 & -1
\end{pmatrix}.
\end{align*}
Let $\mathcal{H}^{\otimes n}$ be the Kronecker product of $n$ Hadamard transforms, which changes $\ket{0}_n$ to the uniform superposition:
\begin{align*}
\mathcal{H}^{\otimes n} \ket{0}_n = \frac{1}{\sqrt{2^n}} \sum_{x \in \{0,1\}^n } \ket{x}.
\end{align*} 
One can define\footnote{We follow the notation convention in the quantum computing literature, for example, \cite{aaronson2020quantum}: $I_n$  is the identity matrix on $n$ qubits, that is, a $2^n \times 2^n$ matrix.} $\mathcal{A}$ as $\mathcal{A}=\mathcal{U}_f (\mathcal{H}^{\otimes n} \otimes I_1)$, which has the form in \eqref{A_def}:
\begin{align*}
 \mathcal{A}\ket{0}_{n+1} = \sqrt{p} \frac{1}{\sqrt{\#\{x: f(x)=1\}}} \sum_{x: f(x)=1} \ket{x}\ket{1} + \sqrt{1-p} \frac{1}{\sqrt{\#\{x: f(x)=0\}}} \sum_{x: f(x)=0} \ket{x}\ket{0},
\end{align*}
where $p$ is the proportion of $x$ in $\{0,1\}^n$ such that $f(x)=1$. 

If one estimates $p$ by classical Monte Carlo sampling, that is,  sampling $x_1,...,x_M$ independently and uniformly from $\{0,1\}^n$ and using $(1/M) \sum_{i=1}^M f(x_i)$ as the estimate, then the estimation error is $O(1/\sqrt{M})$. Here $M$ equals the number of times $f$ is queried. By contrast, the estimation error can achieve $O(1/M)$ using amplitude estimation, up to possible logarithmic factors, by querying $\mathcal{U}_f$  through a quantum computer for $M$ times.  

 We focus on amplitude estimation based on  amplitude amplification \citep{brassard2002quantum}, an algorithm that generalizes Grover's algorithm \citep{grover1996fast}.  We follow the description in  \cite{suzuki2020amplitude}. Instead of measuring $\ket{\Psi}$ directly, one can apply the following operator on $\ket{\Psi}$:
\begin{align*}
\mathcal{Q} = -\mathcal{A} \mathcal{S}_0 \mathcal{A}^{-1} \mathcal{S}_{\chi}, 
\end{align*}
where
\begin{align}
\mathcal{S}_0 = I_{n+1}-2 \ket{0}_{n+1} \bra{0}_{n+1}, \,\,
\mathcal{S}_{\chi} =  I_{n+1} - 2 (I_n \otimes \ket{1} \bra{1}).  \label{chi}
\end{align}
In the following, $\mathcal{Q}$ is referred to as the Grover operator. The operator $-\mathcal{A} \mathcal{S}_0 \mathcal{A}^{-1}$  performs  a reflection with respect to $\ket{\Psi}$. And the operator $\mathcal{S}_{\chi}$ puts a negative sign to good states and does nothing to  bad states, that is, $\mathcal{S}_{\chi}\ket{\Psi}=-\sqrt{p} \ket{\Psi_1}\ket{1}+\sqrt{1-p} \ket{\Psi_0} \ket{0}$. Also note that $\mathcal{S}_{\chi}$ in \eqref{chi} identifies good states by simply checking whether the last qubit is on $\ket{1}$. 

Let $\theta = \arcsin \sqrt{p}$, which is in $[0,\frac{\pi}{2}]$.  \cite{brassard2002quantum} showed that applying $\mathcal{Q}$ on $\ket{\Psi}$  for $m$ times gives
\begin{align*}
\mathcal{Q}^m \ket{\Psi} = \sin ((2m+1)\theta) \ket{\Psi_1}\ket{1}+ \cos ((2m+1)\theta) \ket{\Psi_0}\ket{0}, 
\end{align*}
which implies that one obtains $\ket{1}$ with probability $\sin^2 ((2m+1)\theta)$ when measuring the last qubit of $\mathcal{Q}^m\ket{\Psi}$. 
In general, one can select a sequence of $m_t$ values for $t=0,...,T$, and for each $m_t$ take $N_t$ independent measurements by repeating the above process for $N_t$ times.  Let $X_t$ be the number of good states among the $N_t$ measurements, which follows a binomial distribution:
\begin{align}
    \mathbb{P}\left (X_t =s  \right )   = \binom{N_t}{s} \left (\sin^2 ((2m_t+1)\theta) \right )^s \left (\cos^2 ((2m_t+1)\theta) \right )^{N_t-s}, \,\, s=0,...,N_t. \label{Q_m}
\end{align}
Define $N_\textnormal{oracle}=\sum_{t=0}^T N_t m_t$, called the number of oracle queries\footnote{Rigorously speaking, it seems more appropriate to count the number of oracle queries in one application of $\mathcal{Q}$ twice. Here we follow the definition in \cite{grinko2021iterative} for comparison. }, which measures the complexity of the sample in this scenario because one needs to apply $\mathcal{Q}$ for $m_t$ times to obtain a single measurement. The goal is to make the estimation error for $p$ achieve $O(1/N_\textnormal{oracle})$ up to a possible logarithmic factor.

\section{Main Idea}\label{sec:idea}

Eq. \eqref{Q_m} is the starting point of a number of recent Grover-based amplitude estimation methods \citep{aaronson2020quantum,suzuki2020amplitude,grinko2021iterative,nakaji2020faster}, including ours. 

The original motivation of   applying $\mathcal{Q}$ repeatedly is to increase the amplitude $\sqrt{p}$ approximately linearly for small $p$.  By contrast, a classical brute-force search algorithm increases the probability $p$ linearly. Amplitude amplification therefore  achieves a quadratic speedup over the classical brute-force search when $p$ is small. As discovered by \cite{aaronson2020quantum} and \cite{suzuki2020amplitude},  applying $\mathcal{Q}$ repeatedly also improves the estimation of $p$ despite that $p$ is not necessarily small.

The estimation error  based on a Monte Carlo sample of size $M$ scales as $O(1/\sqrt{M})$. By contrast, increasing $m_t$ in \eqref{Q_m} in an appropriate manner can reduce the estimation error to $O(1/m_t)$. We briefly explain the reason.  Let $[L,U]$ be a confidence interval for $\sin^2 ((2m_t+1)\theta)$ based on $X_t$ and $N_t$. Due to the periodicity of $\sin^2 ((2m_t+1)\theta)$, such an interval is equivalent to the union of $2m_t+1$ intervals for $\theta$:
\begin{align}
I ^{+, (j)} & =  \left [ \frac{\arcsin \sqrt{L}+j \pi}{2m_t+1}, \frac{\arcsin \sqrt{U}+j \pi}{2m_t+1} \right ], \quad j=0,1,...,m_t,  \nonumber \\
I ^{-, (j)} & =   \left [ \frac{-\arcsin \sqrt{U}+j \pi}{2m_t+1}, \frac{-\arcsin \sqrt{L}+j \pi}{2m_t+1} \right ], \quad j=1,...,m_t. \label{interval_pn}
\end{align}
 Note that each interval is contained in one of the intervals $[0,\frac{1}{2m_t+1}\frac{\pi}{2}],[\frac{1}{2m_t+1}\frac{\pi}{2},\frac{2}{2m_t+1}\frac{\pi}{2}],..., [\frac{2m_t}{2m_t+1}\frac{\pi}{2},\frac{\pi}{2}]$, referred to as \textit{period} in the following. If we are able to determine the correct period, then the estimation error for $\theta$ is in the order of $O(1/m_t)$. The estimation error for $p$ is also in the order of $O(1/m_t)$ since $p=\sin^2(\theta)$ is Lipschitz continuous.

It is a natural idea to design a sequential algorithm to determine the period. First, use the measurements from the original $\ket{\Psi}$, i.e., $m_0=0$, to construct an initial confidence interval for $\theta$, which does not have the multi-value issue. In the following steps, use the confidence interval estimated in the previous step to determine the period of $\theta$. For simplicity, assume for now that $m_t$ grows at a geometric rate such as $2m_t+1=K^t$, where $K$ is an odd number.  The number of oracle queries  in the final step is therefore in the same order of that in all previous steps, both $O(K^T)$. This implies the estimation error is in the order of $O(1/N_\textnormal{oracle})$.   Similar ideas have appeared in the literature:  although not design a sequential algorithm, \cite{suzuki2020amplitude} recommended using an exponentially incremental sequence $\{m_t\}$ in MLAE. \cite{grinko2021iterative} designed a sequential algorithm that uses  data-dependent $\{m_t\}$ determined by a search sub-routine, which will be discussed  in Section \ref{sec:numerical}.  Our choice of $\{m_t\}$ will be given in Section \ref{sec:alg}.

Although promising, the above idea  has a serious caveat: when the true value of $\theta$ is at or near the boundary of two adjacent periods for the subsequent step, the estimated confidence interval can  overlap with both periods even though we are able to control the length of the interval.  Panel (a) in Figure \ref{fig:idea2} illustrates this problem. The true value of $p$ is set as 0.2 in this toy example. The confidence interval for $\theta$ from step 0 overlaps with two periods $[0,\pi/6]$ and $[\pi/6,\pi/3]$, which brings difficulty in step 1: the algorithm does not know how to choose between the two intervals, each contained in a period. We propose the following solution to the problem, which is the key ingredient of our algorithm. 
\begin{figure}[!htp]
	\twoImages{adjust1}{7cm}{(a)  Confidence interval for $\theta$ in step 0}{adjust3}{7cm}{(c) Confidence interval for $\theta_1$ in step 1}
	\twoImages{adjust2}{7cm}{(b)  Confidence interval for $\theta_1$ in step 0}{adjust4}{7cm}{(d) Confidence interval for $\theta$ in step 1}
	\caption{An illustration of the first two steps of the adaptive algorithm. Panel (a): the confidence interval for $\theta$ in step 0 overlaps with two periods $[0,\frac{\pi}{6}]$ and $[\frac{\pi}{6},\frac{\pi}{3}]$. Panel (b): the upper limit of the confidence interval for $\theta_1=\arcsin \sqrt{r_1 \sin^2(\theta)}$  coincides with the boundary so that the interval is contained in the period $[0,\frac{\pi}{6}]$. Panel (c): the confidence interval for $\theta_1$ in step 1, constructed by running the Grover operator \eqref{adjust_grover} on  $(\mathcal{A}\otimes \mathcal{R}_{t+1})\ket{0}_{n+1}\ket{0}$. Panel (d): convert to the confidence interval for $\theta$.  }
	
	\label{fig:idea2}
\end{figure}

Denote the confidence interval for $\theta$ in step $t$ by  $[\theta_{t}^L,\theta_{t}^U]$. If there exists $k$ such that $\frac{k}{2m_{t+1}+1} \frac{\pi}{2} \in (\theta_{t}^L,\theta_{t}^U)$, then we introduce an adjustment factor
\begin{align*}
r_{t+1}  = \left . \sin ^2 \left (\frac{k}{2m_{t+1}+1} \frac{\pi}{2} \right ) \right / \sin^2 (\theta_{t}^U);
\end{align*}
otherwise $r_{t+1}=1$. The factor $r_{t+1}$ adjusts the scale of $\theta$ and makes the new confidence interval contained in a single  period of length $\frac{1}{2m_{t+1}+1} \frac{\pi}{2}$. Specifically, let $\theta_{t+1}=\arcsin \sqrt {r_{t+1} p }$. It is easy to check that the upper limit of the confidence interval for $\theta_{t+1}$ is  $\frac{k}{2m_{t+1}+1} \frac{\pi}{2}$ according to the definition of $r_{t+1}$. We will prove in Lemma \ref{thm:shorten} that the confidence interval contracts with this adjustment, which implies that the interval for $\theta_{t+1}$ is fully contained in  $[\frac{k-1}{2m_{t+1}+1} \frac{\pi}{2}, \frac{k}{2m_{t+1}+1} \frac{\pi}{2}]$ when we control the length of $[\theta_{t}^L,\theta_{t}^U]$ within $\frac{1}{2m_{t+1}+1}\frac{\pi}{2}$. 

In step $t+1$, we add a qubit and define an adjusted unitary transformation  on  $n+2$ qubits such that the probability of obtaining a good state is $r_{t+1}p$ when measuring the state after the  transformation. Such an adjustment has been introduced in \cite{aaronson2020quantum} for a different purpose. Let $\mathcal{R}_{t+1}$ satisfy $\mathcal{R}_{t+1}\ket{0} =\sqrt{r_{t+1}}\ket{1}+\sqrt{1-r_{t+1}} \ket{0}$. Then
 \begin{align}
\ket{\tilde{\Psi}}=(\mathcal{A}\otimes \mathcal{R}_{t+1})\ket{0}_{n+1}\ket{0} & =\sqrt{r_{t+1}p} \ket{\Psi_1}\ket{1}\ket{1}  +\sqrt{r_{t+1}(1-p)} \ket{\Psi_0} \ket{0}\ket{1} \nonumber \\
& \quad +\sqrt{(1-r_{t+1})p} \ket{\Psi_1} \ket{1}\ket{0}+\sqrt{(1-r_{t+1})(1-p)} \ket{\Psi_0} \ket{0}\ket{0}. \label{adjust_A}
\end{align}
Now a basis state is defined as good state if  the last two qubits are on $\ket{11}$. 

\begin{remark}
\textit{A quantum state can be viewed as a random variable when it is measured. Using the terminology of probability, the above operation is  adding a ``random variable'' $\mathcal{R}_{t+1}\ket{0}$, which equals 1 with probability $r_{t+1}$. From \eqref{adjust_A}, $\ket{\Psi}$ and $\mathcal{R}_{t+1}\ket{0}$  can be understood as ``independent random variables''. The ``joint probability'' of both $\mathcal{R}_{t+1}\ket{0}$ and the last bit of $\ket{\Psi}$  being 1 is therefore $r_{t+1}p$. Below we give details of the Grover operator on $ \ket{\Psi} \otimes \mathcal{R}_{t+1}\ket{0}$. The readers who are only interested in the statistical model can directly go to \eqref{adjusted_binom}.}
\end{remark}

Note that the last three terms in \eqref{adjust_A} are orthogonal to $\ket{\Psi_1 1 1}$. Denote the combination of these terms, after normalization, by $\ket{\Psi_111^{\perp}}$. We define an operator that amplifies the amplitude $\sqrt{r_{t+1}p}$:
\begin{align}
\mathcal{Q}_{t+1} = -(\mathcal{A}\otimes \mathcal{R}_{t+1}) (I_{n+2}-2 \ket{0}_{n+2} \bra{0}_{n+2}) (\mathcal{A}\otimes \mathcal{R}_{t+1})^{-1} ( I_{n+2} - 2 (I_n \otimes \ket{11} \bra{11})). \label{adjust_grover}
\end{align} 
In \eqref{adjust_grover}, the operator $I_{n+2} - 2 (I_n \otimes \ket{11} \bra{11})$ puts a negative sign to $\ket{\Psi_1 1 1}$ and does nothing to  other terms. That is,
\begin{align*}
( I_{n+2} - 2 (I_n \otimes \ket{11} \bra{11})) \ket{\Phi}  & =-\sqrt{r_{t+1}p} \ket{\Psi_1 1 1}  +\sqrt{1-r_{t+1}p} \ket{\Psi_111^{\perp}}.
\end{align*}
The operator $-(\mathcal{A}\otimes \mathcal{R}_{t+1}) (I_{n+2}-2 \ket{0}_{n+2} \bra{0}_{n+2}) (\mathcal{A}\otimes \mathcal{R}_{t+1})^{-1}=-(I_{n+2}-2 \ket{\tilde{\Psi}} \bra{\tilde{\Psi}})$ performs  a reflection with respect to $\tilde{\Psi}$. Therefore, by the same argument in \cite{brassard2002quantum},
\begin{align}
\mathcal{Q}_{t+1}^{m_{t+1}} \ket{\tilde{\Psi}} = \sin ((2m_{t+1}+1)\theta_{t+1}) \ket{\Psi_1 11}+ \cos ((2m_{t+1}+1)\theta_{t+1}) \ket{\Psi_111^{\perp}}. \label{adjust_Q_m}
\end{align}
When repeating the process and measuring $\mathcal{Q}_{t+1}^{m_{t+1}} \ket{\tilde{\Psi}}$ for $N_{t+1}$ times, the number of observed good states $X_{t+1}$ follows a binomial distribution:
\begin{align}
        \mathbb{P}\left (X_{t+1} =s  \right )   = \binom{N_{t+1}}{s} \left (\sin^2 ((2m_{t+1}+1)\theta_{t+1}) \right )^s \left (\cos^2 ((2m_{t+1}+1)\theta_{t+1}) \right )^{N_{t+1}-s}, \,\, s=0,...,N_{t+1}. \label{adjusted_binom}
\end{align}
Since $\theta_{t+1}$ is contained in  $[\frac{k-1}{2m_{t+1}+1} \frac{\pi}{2}, \frac{k}{2m_{t+1}+1} \frac{\pi}{2}]$ for  certain $k$, only one interval with the form in \eqref{interval_pn} is a legitimate confidence interval for  $\theta_{t+1}$. Finally, we convert the interval for $\theta_{t+1}$ back to the interval for $\theta$. Denote the new interval for $\theta$ by $[\theta_{t+1}^L,\theta_{t+1}^U]$. At the same time, we select appropriate $N_{t+1}$ such that  $| \theta_{t+1}^L-\theta_{t+1}^U  |\leq \frac{1}{2m_{t+2}+1}\frac{\pi}{2}$ so that the recursion can continue. We illustrate the first two steps of the above procedure in Figure \ref{fig:idea2}.

We apply \eqref{adjust_A} and \eqref{adjust_grover} in a different way than \cite{aaronson2020quantum}. In their method,  $\mathcal{R}$ is defined as $\mathcal{R}\ket{0} =\frac{1}{1000}\ket{1}+\sqrt{1-\frac{1}{1000^2}} \ket{0}$, which shrinks $p$ by a factor of $10^{-6}$. By contrast, we adjust $r_t$ adaptively to avoid the period ambiguity in each step. In practice, $r_t$ is typical close to 1, which makes the estimation lose very little efficiency due to the adjustment. 

\section{Algorithm}\label{sec:alg}
We formally describe the adaptive algorithm in Algorithm \ref{alg:adaptive}. Without loss of generality, we assume $p \in [0,\frac{1}{2}]$ because otherwise one can add artificial bad states to the system by adding a qubit on $\frac{1}{\sqrt{2}}\ket{1}+\frac{1}{\sqrt{2}} \ket{0}$ at the beginning of the algorithm. We need such an assumption to control the length  of the confidence interval when we convert the  interval for $\theta_t$ back to the interval for $\theta$ (Panel (d) in Figure \ref{fig:idea2}). See lines 17--19 in Algorithm \ref{alg:adaptive} and Lemma \ref{thm:stretch} for details.  

\begin{algorithm}[!htp]
	\KwIn{$\epsilon$, $\alpha$, $K$,  $N_{\textnormal{shots}}$ \tcp{$\epsilon$: precision level; $1-\alpha$: confidence level; $K$: odd number $\geq 3$}}
	 $T \leftarrow  \left \lceil \left. \log  \frac{\pi}{K \epsilon} \right / \log K \right \rceil,r_{0}\leftarrow1, \hat{k}_{0}\leftarrow0, m_0\leftarrow 0$ \tcp{$T$: upper bound of the number of iterations} 
  \For{$t=0$ to $T$}{
add artificial bad states as  \eqref{adjust_A} such that  $\mathbb{P}(\textnormal{good state})=r_tp$ \\
$X_t \leftarrow 0,j\leftarrow 0$ \\
	\Repeat{$|\theta^L_t-\theta^U_t | \leq \frac{1}{K} \frac{1}{2m_t+1}\frac{\pi}{2}$}{
		\tcp{increase the sample size $N_t$ by $N_\textnormal{shots}$ at each time until the length of the CI for $\theta$ is less than or equal to $\frac{1}{K} \frac{1}{2m_t+1}\frac{\pi}{2}$}
		$j\leftarrow j+1$ \\
		$N_t \leftarrow j N_{\textnormal{shots}}$ \\
		$X_t \leftarrow$ $X_t+$ the number of good states observed by measuring $\mathcal{Q}_{t}^{m_{t}} \ket{\tilde{\Psi}}$ (defined in \eqref{adjust_Q_m}) for $N_{\textnormal{shots}}$ times \\
		$\delta_t\leftarrow \sqrt{\log \left ( \frac{\pi^2    (T+1)}{3\alpha } j^2 \right ) \frac{1}{2N_t} }$ \tcp{the choice of $\delta_t$ makes the coverage probability  at least $1-\frac{\alpha}{T+1}$ in step $t$}
		$L_t  \leftarrow \max \left (\frac{X_t}{N_{t}}-\delta_t, 0 \right ), U_t \leftarrow \min \left (\frac{X_t}{N_{t}}+\delta_t, 1 \right )$ \\
		  \uIf{$\hat{k}_{t} \equiv 0\, (\textnormal{mod } 2)$ \tcp{CI for $\theta_t$, based on the period estimated from step $t-1$}}{
$[\theta^L_t,\theta^U_t] \leftarrow \left [\frac{\arcsin \sqrt{L_t}+\hat{k}_{t}\pi /2}{2m_t+1}, \frac{\arcsin \sqrt{U_t}+\hat{k}_{t}\pi/2}{2m_t+1} \right ]$
		}
		\Else{
$[\theta^L_t,\theta^U_t] \leftarrow \left [\frac{-\arcsin \sqrt{U_t}+(\hat{k}_{t}+1)\pi/2}{2m_t+1},  \frac{-\arcsin \sqrt{L_t}+(\hat{k}_{t}+1)\pi/2}{2m_t+1} \right ]$
		}
	$\theta_{t}^L\leftarrow \min(\theta_{t}^L,\arcsin \sqrt{r_{t}/2}),\theta_{t}^U\leftarrow \min(\theta_{t}^U,\arcsin \sqrt{r_{t}/2})$ \\
			  \uIf{$r_t<1$}{$[\theta^L_t,\theta^U_t] \leftarrow \left [  \arcsin \sqrt{(\sin \theta_{t}^L)^2/r_{t}},   \arcsin \sqrt{(\sin \theta_{t}^U)^2/r_{t}} \right ]$ \tcp{CI for $\theta$}}
} 
 $p^L\leftarrow \sin^2 (\theta_t^L),p^U \leftarrow \sin^2 (\theta_t^U)$ \\
 \uIf{$t=T$ OR $|p^L-p^U|\leq \epsilon$}{go to \textbf{Output}} 
 $m_{t+1} \leftarrow \left \lfloor \frac{1}{|\theta^L_t-\theta^U_t|}\frac{\pi}{4}-\frac{1}{2} \right \rfloor$ \\
 $\hat{k}_{t+1} \leftarrow \lfloor \left . 2 (2m_{t+1}+1) \theta_t^L \right / \pi \rfloor$ \tcp{determine the period} 
 \uIf{$\frac{\hat{k}_{t+1}+1}{2m_{t+1}+1}  \frac{\pi}{2}< \theta_t^U$}{$r_{t+1}  \leftarrow \left . \sin ^2 (\frac{\hat{k}_{t+1}+1}{2m_{t+1}+1} \frac{\pi}{2}) \right / \sin^2 (\theta_{t}^U)$ \tcp{set the adjustment factor $r_{t+1}$ if $(\theta^L_t,\theta^U_t)$ contains a boundary point of two periods}}
 \Else{$r_{t+1} \leftarrow 1$}
}
	\KwOut{$[p^L,p^U]$}
		\caption{Adaptive  Algorithm for Amplitude Estimation} \label{alg:adaptive}
	
\end{algorithm}

We choose $\{m_t\}$ that grows at least as fast as a geometric progression. That is, we choose  
$m_{t+1}$ as the largest integer such that  $|\theta^L_t-\theta^U_t | \leq  \frac{1}{2m_{t+1}+1}\frac{\pi}{2}$ (lines 20 and 24 in Algorithm \ref{alg:adaptive}), which implies $2m_{t+1}+1\geq K^t$. This choice takes full advantage of the precision of the current  interval for $\theta$ and can potentially make the length of the interval reach the desired precision level $\epsilon$ in fewer steps. 

Another  ingredient of Algorithm \ref{alg:adaptive} is that instead of preselecting the sample size $N_t$ in step $t$ such that $|\theta^L_t-\theta^U_t | \leq \frac{1}{K} \frac{1}{2m_t+1}\frac{\pi}{2}$, which is usually very conservative, we gradually increase $N_t$ by a fixed $N_\textnormal{shots}$ at each time until $|\theta^L_t-\theta^U_t | \leq \frac{1}{K} \frac{1}{2m_t+1}\frac{\pi}{2}$ is satisfied. This brings a subtle difficulty to the theoretical analysis. That is, when the condition $|\theta^L_t-\theta^U_t | \leq \frac{1}{K} \frac{1}{2m_t+1}\frac{\pi}{2}$ is met, the  data used to construct the confidence interval $[L_t,U_t]$ for $\sin^2 ((2m_t+1)\theta_t)$ (line 10 in Algorithm \ref{alg:adaptive}), rigorously speaking, is no longer a random sample. More specifically,
\begin{align*}
\mathbb{P}\left (X_t =j  \left | |\theta^L_t-\theta^U_t |  \leq \frac{1}{K} \frac{1}{2m_t+1}\frac{\pi}{2} \right. \right )   \neq \binom{N_t}{j} \left (\sin^2 ((2m_t+1)\theta_t) \right )^j \left (\cos^2 ((2m_t+1)\theta_t) \right )^{N_t-j},
\end{align*}
so Hoeffding's inequality \citep{Hoeffding63} cannot be directly applied to interval estimation. To resolve this difficulty, we choose $\delta_t$ (line 9 in Algorithm \ref{alg:adaptive})  such that an infinite sequence of  confidence intervals based on $jN_\textnormal{shots} \,\, (j=1,2,...)$ observations \textit{simultaneously} contain $\sin^2 ((2m_t+1)\theta_t)$ with probability at least $1-\frac{\alpha}{T+1}$. Therefore, the interval satisfying $|\theta^L_t-\theta^U_t | \leq \frac{1}{K} \frac{1}{2m_t+1}\frac{\pi}{2}$ also contains $\sin^2 ((2m_t+1)\theta_t)$ with probability at least $1-\frac{\alpha}{T+1}$. 

Next, we present the main theorem showing that the output $[p^L,p^U]$ of Algorithm \ref{alg:adaptive} reaches the pre-specified confidence level $1-\alpha$ and  precision level $\epsilon$. Moreover, the number of oracle queries $N_{\textnormal{oracle}}$ achieves $O(1/\epsilon)$ up to  a double-logarithmic factor of $\epsilon$. 
 \begin{theorem}\label{thm:main}
	If $0\leq p\leq 1/2$, the output $[p^L,p^U]$ of Algorithm \ref{alg:adaptive} satisfies the following properties:
	\begin{enumerate}
		\item $\mathbb{P}\left (p \in [p^L,p^U] \right ) \geq 1-\alpha$.
		\item $|p^L-p^U| \leq \epsilon$.
		\item $$N_{\textnormal{oracle}}=O \left ( \log \left ( \frac{\pi^2 (T+1) }{3 \alpha}  \right ) \frac{1}{\epsilon} \right ), $$
		where $T= \left \lceil \left. \log  \frac{\pi}{K \epsilon} \right / \log K \right \rceil$.
	\end{enumerate}
\end{theorem}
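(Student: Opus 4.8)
The plan is to establish the three claims in sequence, with the coverage statement (claim~1) carrying almost all of the work and claims~2--3 reducing to deterministic growth estimates. For coverage I would isolate a single global ``good event'' $G=\bigcap_{t=0}^{T}E_t$, where $E_t$ is the event that the Hoeffding interval $[L_t,U_t]$ built in step $t$ contains the true value $\sin^2((2m_t+1)\theta_t)$ at the \emph{data-dependent} sample size at which the repeat-loop terminates. The crucial point is that the choice $\delta_t=\sqrt{\log(\frac{\pi^2(T+1)}{3\alpha}j^2)/(2N_t)}$ turns the family $\{[L_t,U_t]\}_j$ into a confidence sequence: conditioning on the history $\mathcal H_{t-1}$ (which fixes $m_t,r_t,\hat k_t$, hence $\theta_t$), the step-$t$ measurements are i.i.d.\ Bernoulli$(\sin^2((2m_t+1)\theta_t))$, and Hoeffding's inequality gives $\mathbb P(\text{$j$-th interval misses}\mid\mathcal H_{t-1})\le \frac{6\alpha}{\pi^2(T+1)}j^{-2}$; summing over $j=1,2,\dots$ with $\sum_j j^{-2}=\pi^2/6$ yields $\mathbb P(E_t^{c}\mid\mathcal H_{t-1})\le \alpha/(T+1)$, which is exactly what neutralizes the bias introduced by the adaptive stopping rule. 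A union bound over the at most $T+1$ steps then gives $\mathbb P(G)\ge 1-\alpha$.

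Second, I would argue \emph{deterministically on $G$}, by induction on $t$, that every period index is identified correctly and every reported interval covers the target. Assuming $\theta\in[\theta_t^L,\theta_t^U]$ with $|\theta_t^L-\theta_t^U|\le\frac1K\frac{1}{2m_t+1}\frac\pi2$ at the end of step $t$, the choice of $m_{t+1}$ together with $\hat k_{t+1}=\lfloor 2(2m_{t+1}+1)\theta_t^L/\pi\rfloor$ pins down the period of the next step that contains $\theta$; when $(\theta_t^L,\theta_t^U)$ straddles a period boundary, Lemma~\ref{thm:shorten} guarantees that the adjustment by $r_{t+1}$ contracts the interval so that $\theta_{t+1}=\arcsin\sqrt{r_{t+1}p}$ lies in the \emph{single} period $[\frac{\hat k_{t+1}}{2m_{t+1}+1}\frac\pi2,\frac{\hat k_{t+1}+1}{2m_{t+1}+1}\frac\pi2]$ (when $r_{t+1}=1$ this is trivial). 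Hence exactly one of the intervals in \eqref{interval_pn} is legitimate, and on $E_{t+1}$ it contains $\theta_{t+1}$. Finally Lemma~\ref{thm:stretch}, where the hypothesis $p\le\frac12$ and the clipping in line~17 of Algorithm~\ref{alg:adaptive} are used, shows the inverse adjustment $\theta\mapsto\arcsin\sqrt{(\sin\theta)^2/r_{t+1}}$ carries this back to an interval for $\theta$ that still contains $\theta$ and whose length is controlled, closing the induction. At output this gives $\theta\in[\theta_t^L,\theta_t^U]$, so $p=\sin^2\theta\in[p^L,p^U]$, establishing claim~1.

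Third, for claim~2 the algorithm outputs only when $t=T$ or $|p^L-p^U|\le\epsilon$, so only the case $t=T$ needs argument. I would first prove $2m_t+1\ge K^t$ by induction: the repeat-loop exit forces $|\theta_t^L-\theta_t^U|\le\frac1K\frac{1}{2m_t+1}\frac\pi2$, so the largest integer $m_{t+1}$ with $|\theta_t^L-\theta_t^U|\le\frac{1}{2m_{t+1}+1}\frac\pi2$ satisfies $2m_{t+1}+1\ge K(2m_t+1)\ge K^{t+1}$, using that $K$ is odd so $K(2m_t+1)$ is itself an admissible odd value. Since $\sin^2$ is $1$-Lipschitz in $\theta$, $|p^L-p^U|\le|\theta_T^L-\theta_T^U|\le\frac1K\frac{1}{2m_T+1}\frac\pi2\le\frac{\pi}{2K^{T+1}}$, and the definition $T=\lceil\log(\pi/(K\epsilon))/\log K\rceil$ yields $K^{T+1}\ge\pi/\epsilon$, hence $|p^L-p^U|\le\epsilon$.

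Finally, for claim~3 I would write $N_{\textnormal{oracle}}\le\sum_{t=0}^{T}N_t m_t$. The decisive observation is that the stopping condition only requires resolving each period to the \emph{fixed} fraction $1/K$, independent of $t$; tracing this through the arcsine conversion, the required half-width is $\delta_t=O(1/K)$, so the equation $jN_{\textnormal{shots}}\propto K^2\log(\frac{\pi^2(T+1)}{3\alpha}j^2)$ gives $N_t=O(\log(\frac{\pi^2(T+1)}{3\alpha}))$ uniformly in $t$ (the $\log j^2$ term being lower order). Combined with the matching upper bound $m_t=O(K^t)$ and the geometric sum $\sum_{t=0}^{T}K^t=O(K^T)=O(1/\epsilon)$ (again from the choice of $T$), this gives $N_{\textnormal{oracle}}=O(\log(\frac{\pi^2(T+1)}{3\alpha})\,\epsilon^{-1})$. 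I expect the main obstacle to be twofold: making the conditional confidence-sequence argument rigorous under the adaptive sample sizes of paragraph~1, and controlling the non-Lipschitz behaviour of $\arcsin\sqrt{\cdot}$ near $0$ and $1$ when bounding both the interval length in the induction and $N_t$ in the query count---this is precisely where the clipping in line~17, the factor $r_t$, and the assumption $p\le\frac12$ must be invoked through Lemmas~\ref{thm:shorten} and~\ref{thm:stretch}.
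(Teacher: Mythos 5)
Your treatment of Claims 1 and 2 matches the paper's proof in all essentials: the same per-step confidence-sequence construction, with $\delta_t$ chosen so that Hoeffding's inequality plus a union bound over $j$ gives conditional failure probability at most $\alpha/(T+1)$ despite the adaptive stopping of the repeat loop; the same union bound over the $T+1$ steps; the same deterministic induction using Lemma \ref{thm:shorten} to keep the adjusted interval inside a single period, with the clipping of line 17 and the hypothesis $p\le 1/2$ handling the conversion back to $\theta$; and the same $2m_t+1\ge K^t$ induction for precision. Two small remarks: containment of $\theta$ in the back-converted interval follows from monotonicity of $\phi\mapsto\arcsin\sqrt{\sin^2(\phi)/r_t}$, not from Lemma \ref{thm:stretch}, which the paper needs only in Claim 3; and the required half-width of $[L_t,U_t]$ is $O(r_t/K^2)$, not $O(1/K)$, because probability-scale precision is the square of angle-scale precision under $\arcsin\sqrt{\cdot}$ --- harmless since $K$ is a constant, but it means your uniform bound on $N_t$ additionally needs the lower bound $r_t\ge 1/4$, which the paper proves in \eqref{bound_r} and which you only flag as an obstacle without resolving.

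The genuine gap is in Claim 3. You bound $\sum_t N_t m_t$ by asserting a ``matching upper bound $m_t=O(K^t)$,'' but the algorithm only guarantees the lower growth bound $2m_{t+1}+1\ge K(2m_t+1)$; there is no deterministic upper bound of order $K$ on the per-step growth. The width $|\theta_t^L-\theta_t^U|$ at loop exit can overshoot the threshold $\frac{1}{K}\frac{1}{2m_t+1}\frac{\pi}{2}$ by a large factor --- for instance, when $N_{\textnormal{shots}}$ is large the very first batch already produces a much narrower interval, or a shift of the center $X_t/N_t$ toward $1/2$ can shrink the arcsine width sharply in one pass --- and then $m_{t+1}=\lfloor\frac{\pi}{4|\theta_t^L-\theta_t^U|}-\frac{1}{2}\rfloor$ jumps by a factor $C\gg K$. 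Iterating a per-step factor $C>K$ up to $t=T$ gives $C^T=(\pi/\epsilon)^{\log C/\log K}$, polynomially worse than $1/\epsilon$, so your geometric sum $\sum_{t\le T}K^t=O(1/\epsilon)$ does not bound $\sum_t m_t$. The paper closes this by anchoring at the random stopping step $\hat T$ rather than bounding growth forward: since the algorithm did not stop at step $\hat T-1$, $|(\sin\theta_{\hat T-1}^L)^2-(\sin\theta_{\hat T-1}^U)^2|\ge\epsilon$, and combined with $|\theta_{\hat T-1}^L-\theta_{\hat T-1}^U|\le\frac{1}{2m_{\hat T}+1}\frac{\pi}{2}$ this forces $2m_{\hat T}+1\le\pi/\epsilon$; the lower growth bound then gives the backward decay $2m_t+1\le K^{-(\hat T-t)}\pi/\epsilon$, and summing the geometric series yields $N_{\textnormal{oracle}}=O\bigl(\log\bigl(\frac{\pi^2(T+1)}{3\alpha}\bigr)\frac{1}{\epsilon}\bigr)$. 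Faster-than-$K$ growth is thus harmless --- it only makes the algorithm stop earlier --- but your argument, which never invokes the negation of the stopping condition, cannot exploit this.
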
  
\begin{proof}
	Proof of Claim 1: Let  $\theta=\arcsin \sqrt{p},\theta_t=\arcsin \sqrt {r_t p }, t=0,...,T$. We first show that 
	\begin{align*}
	\mathbb{P}\left ( \sin^2 ((2m_t+1) \theta_t) \in  [L_t, U_t], t=0,...,T   \right ) \geq 1-\alpha.
	\end{align*}
	Let $Z_{t,1},Z_{t,2},...$ be a sequence of independently and identically distributed random variables from $\textnormal{Ber}(\sin^2 ((2m_t+1) \theta_t))$. For $j=1,2,...$, let 
	\begin{align*}
	\eta_{j}= \sqrt{\log \left ( \frac{\pi^2    (T+1)}{3\alpha } j^2 \right ) \frac{1}{2j N_{\textnormal{shots}} }}.
	\end{align*} 
	From Hoeffding's inequality, for all $j$,
	\begin{align}
	& \mathbb{P} \left (\left. \left | \frac{1}{j N_{\textnormal{shots}}}\sum_{i=1}^{j N_{\textnormal{shots}} } Z_{t,i} -\sin^2 ((2m_t+1) \theta_t) \right | \geq \eta_j \right | m_t,r_t \right ) \leq 2\exp ( -2 j  N_{\textnormal{shots}} \eta_j^2 ) =\frac{\alpha}{T+1} \frac{6}{\pi^2} \frac{1}{j^2}, \nonumber \\
	& \mathbb{P} \left (\left. \exists j\in \{1,2,...\} \textnormal{ s.t.} \left | \frac{1}{j N_{\textnormal{shots}}}\sum_{i=1}^{j N_{\textnormal{shots}} } Z_{t,i} -\sin^2 ((2m_t+1) \theta_t) \right | \geq \eta_j \right | m_t,r_t \right ) \leq  \frac{\alpha}{T+1}  \sum_{j=1}^{\infty}\frac{6}{\pi^2} \frac{1}{j^2} = \frac{\alpha}{T+1}. \label{union_j}
	\end{align}
	Let $\hat{J}_t$ be the smallest integer in step $t$ such that $|\theta^L_t-\theta^U_t | \leq \frac{1}{K} \frac{1}{2m_t+1}\frac{\pi}{2}$, and let $N_t=\hat{J}_t N_{\textnormal{shots}}, \delta_t=\eta_{\hat{J}_t}$ (the \textbf{repeat} loop in Algorithm \ref{alg:adaptive}). We will leave until the proof of Claim 3 to show there is an upper bound for  $\hat{J}_t$.  
	Eq. \eqref{union_j} implies
	\begin{align*}
	\mathbb{P} \left (\left.   \left | \frac{1}{N_t}\sum_{i=1}^{N_t } Z_{t,i} -\sin^2 ((2m_t+1) \theta_t) \right | \geq \delta_t \right | m_t,r_t \right ) \leq \frac{\alpha}{T+1},
	\end{align*}
	and
	\begin{align*}
	\mathbb{P} \left (  \left | \frac{1}{N_t}\sum_{i=1}^{N_t } Z_{t,i} -\sin^2 ((2m_t+1) \theta_t) \right | \geq \delta_t   \right ) \leq \frac{\alpha}{T+1}.
	\end{align*}
	Therefore, 
	\begin{align*}
	\mathbb{P}\left ( \sin^2 ((2m_t+1) \theta_t) \in  [L_t, U_t], t=0,...,T   \right ) \geq 1-\alpha.
	\end{align*}
	Let $\hat{T}$ be the stopping time of $t$ in the algorithm. The above inequality implies
	\begin{align*}
	\mathbb{P}\left (  \sin^2 ((2m_t+1) \theta_t) \in  [L_t, U_t], t=0,...,\hat{T}  \right ) \geq 1-\alpha.
	\end{align*}
	
	Next we show that 
	\begin{align*}
	\sin^2 ((2m_t+1) \theta_t) \in [L_t, U_t], t=0,...,\hat{T} \quad (*)
	\end{align*}
	implies $p \in [p^L,p^U]$, which proves Claim 1. For the rest of the proof of Claim 1, we assume $(*)$. 
	
	We first show that for $t=0,...,\hat{T}$, $\theta_t$ belongs to the interval defined by (lines 11--16 in Algorithm \ref{alg:adaptive}):
	\begin{align*}
	& \left [\frac{\arcsin \sqrt{L_t}+\hat{k}_{t}\pi /2}{2m_t+1}, \frac{\arcsin \sqrt{U_t}+\hat{k}_{t}\pi/2}{2m_t+1} \right ], \,\, \textnormal{if $\hat{k}_t$ is even}, \\
	& \left [\frac{-\arcsin \sqrt{U_t}+(\hat{k}_{t}+1)\pi/2}{2m_t+1},  \frac{-\arcsin \sqrt{L_t}+(\hat{k}_{t}+1)\pi/2}{2m_t+1} \right ],  \,\, \textnormal{otherwise}.
	\end{align*}
	Denote the interval by $[\check{\theta}_t^L, \check{\theta}_t^U]$.
	
	We use induction. The conclusion holds for $t=0$ because 
	$\theta=\theta_0 \in  [\arcsin \sqrt{L_0},\arcsin \sqrt{U_0}]$, which is a single interval corresponding to $\hat{k}_0=0$. 
	Assume the conclusion holds for step $t$. We now consider step $t+1$. 
Let (lines 17--19)
	\begin{align*}
	\theta_t^L & = \arcsin \sqrt{ \sin^2 ( \min(\check{\theta}_{t}^L,\arcsin \sqrt{r_{t}/2} ) ) /r_{t}}=\min \left (\arcsin \sqrt{\sin^2 ( \check{\theta}_{t}^L) /r_{t}},\pi/4  \right ), \\
	\theta_t^U & =\arcsin \sqrt{ \sin^2 ( \min(\check{\theta}_{t}^U,\arcsin \sqrt{r_{t}/2} ) ) /r_{t}}=\min \left (\arcsin \sqrt{\sin^2 ( \check{\theta}_{t}^U) /r_{t}},\pi/4  \right ).
	\end{align*}
	Note that 
	\begin{align*}
	\theta_{t} \in [\check{\theta}_{t}^L, \check{\theta}_{t}^U] \Rightarrow \theta  \in \left [\arcsin \sqrt{\sin^2 ( \check{\theta}_{t}^L) /r_{t}},\arcsin \sqrt{\sin^2 ( \check{\theta}_{t}^U) /r_{t}} \right ],
	\end{align*}
	which further implies $\theta \in [\theta_{t}^L,\theta_{t}^U]$ since $p \leq 1/2$.
	
	Consider intervals $[0, \frac{1}{2m_{t+1}+1} \frac{\pi}{2}],[ \frac{1}{2m_{t+1}+1} \frac{\pi}{2},\frac{2}{2m_{t+1}+1} \frac{\pi}{2}],...,[\frac{2m_{t+1}}{2m_{t+1}+1} \frac{\pi}{2},\frac{\pi}{2}]$. The choice of $\hat{k}_{t+1}$ (line 25) makes $ \frac{\hat{k}_{t+1}}{2m_{t+1}+1} \frac{\pi}{2} \leq  \theta^L_t <\frac{\hat{k}_{t+1}+1}{2m_{t+1}+1} \frac{\pi}{2}$. If $\theta^U_t \leq \frac{\hat{k}_{t+1}+1}{2m_{t+1}+1} \frac{\pi}{2}$,
	$$[\theta^L_t,\theta^U_t] \subset \left [\frac{\hat{k}_{t+1}}{2m_{t+1}+1}\frac{\pi}{2},\frac{\hat{k}_{t+1}+1}{2m_{t+1}+1}\frac{\pi}{2} \right ].$$ Otherwise, define
	\begin{align*}
	r_{t+1}  = \left . \sin ^2 \left (\frac{\hat{k}_{t+1}+1}{2m_{t+1}+1} \frac{\pi}{2} \right ) \right / \sin^2 (\theta_{t}^U),
	\end{align*}
	which implies $\arcsin \sqrt{r_{t+1}(\sin\theta^U_t)^2}=\frac{\hat{k}_{t+1}+1}{2m_{t+1}+1} \frac{\pi}{2}$.

	From Lemma \ref{thm:shorten} in the appendix, since $r_{t+1}\leq 1$, 
	\begin{align*}
	\left |\arcsin \sqrt{r_{t+1}(\sin \theta^L_t)^2}-\arcsin \sqrt{r_{t+1}(\sin\theta^U_t)^2} \right |\leq  |\theta^L_t-\theta^U_t | \leq \frac{1}{2m_{t+1}+1}\frac{\pi}{2},
	\end{align*}
	where the last inequality is guaranteed by the algorithm (line 20). 
	Therefore, 
	\begin{align*}
	\left [ \arcsin \sqrt{r_{t+1}(\sin \theta^L_t)^2},\arcsin \sqrt{r_{t+1}(\sin\theta^U_t)^2} \right ] \subset \left  [\frac{\hat{k}_{t+1}}{2m_{t+1}+1}\frac{\pi}{2},\frac{\hat{k}_{t+1}+1}{2m_{t+1}+1}\frac{\pi}{2} \right ].
	\end{align*}
	
	Therefore, 
	\begin{align*} 
	& \theta \in [\theta_t^L ,\theta_t^U] \Rightarrow \theta_{t+1} \in \left [ \arcsin \sqrt{r_{t+1}(\sin \theta^L_t)^2},\arcsin \sqrt{r_{t+1}(\sin\theta^U_t)^2} \right ]  \\
	& \Rightarrow \theta_{t+1} \in \left [\frac{\hat{k}_{t+1}}{2m_{t+1}+1}\frac{\pi}{2},\frac{\hat{k}_{t+1}+1}{2m_{t+1}+1}\frac{\pi}{2} \right ].
	\end{align*}
	Note that  $\sin^2 ((2m_{t+1}+1)\phi)=\frac{1}{2} -\frac{1}{2}\cos(2(2m_{t+1}+1)\phi)$ is strictly increasing for all $\phi \in \left [\frac{\hat{k}_{t+1}}{2m_{t+1}+1}\frac{\pi}{2},\frac{\hat{k}_{t+1}+1}{2m_{t+1}+1}\frac{\pi}{2} \right ]$ when $\hat{k}_{t+1}$ is even, and is strictly decreasing in that interval when $\hat{k}_{t+1}$ is odd. When $\hat{k}_{t+1}$ is even, the unique solution of equation $\sin^2 ((2m_{t+1}+1) \phi)=y$ in $\left [\frac{\hat{k}_{t+1}}{2m_{t+1}+1}\frac{\pi}{2},\frac{\hat{k}_{t+1}+1}{2m_{t+1}+1}\frac{\pi}{2} \right ]$ is
	\begin{align}
	\phi=\frac{\arcsin \sqrt{y}}{2m_{t+1}+1}+\frac{\hat{k}_{t+1}}{2m_{t+1}+1}\frac{\pi}{2}. \label{solution}
	\end{align}
	In fact, one can verify that \eqref{solution} satisfies the equation and is within $\left [\frac{\hat{k}_{t+1}}{2m_{t+1}+1}\frac{\pi}{2},\frac{\hat{k}_{t+1}+1}{2m_{t+1}+1}\frac{\pi}{2} \right ]$. The solution is unique because the function is strictly monotonic. Therefore, the function $y=\sin^2 ((2m_{t+1}+1) \phi)$ has a inverse on $\left [\frac{\hat{k}_{t+1}}{2m_{t+1}+1}\frac{\pi}{2},\frac{\hat{k}_{t+1}+1}{2m_{t+1}+1}\frac{\pi}{2} \right ]$, defined by \eqref{solution}. Furthermore, since the function is  increasing,
	\begin{align*}
	& \,\, \theta_{k+1} \in \left [\frac{\hat{k}_{t+1}}{2m_{t+1}+1}\frac{\pi}{2},\frac{\hat{k}_{t+1}+1}{2m_{t+1}+1}\frac{\pi}{2} \right ] \textnormal{ and } \sin^2 ((2m_{t+1}+1) \theta_{t+1}) \in [L_{t+1}, U_{t+1}] \\
	\Leftrightarrow & \,\,\theta_{k+1} \in \left [\frac{\arcsin \sqrt{L_{t+1}}+\hat{k}_{t+1}\pi /2}{2m_{t+1}+1}, \frac{\arcsin \sqrt{U_{t+1}}+\hat{k}_{t+1}\pi/2}{2m_{t+1}+1} \right ].
	\end{align*}
	Similarly, when $\hat{k}_{t+1}$ is odd,
	\begin{align*}
	& \,\, \theta_{k+1} \in \left [\frac{\hat{k}_{t+1}}{2m_{t+1}+1}\frac{\pi}{2},\frac{\hat{k}_{t+1}+1}{2m_{t+1}+1}\frac{\pi}{2} \right ] \textnormal{ and } \sin^2 ((2m_{t+1}+1) \theta_{t+1}) \in [L_{t+1}, U_{t+1}] \\
	\Leftrightarrow & \,\,\theta_{k+1} \in \left [\frac{-\arcsin \sqrt{U_{t+1}}+(\hat{k}_{t+1}+1)\pi/2}{2m_{t+1}+1},  \frac{-\arcsin \sqrt{L_{t+1}}+(\hat{k}_{t+1}+1)\pi/2}{2m_{t+1}+1} \right ].
	\end{align*}
	We have therefore proved the conclusion for $t+1$. Moreover, we have also shown that $\theta \in [\theta_t^L ,\theta_t^U]$ in the above proof. Finally, by the definition of $[p^L,p^U]$, $p \in [p^L,p^U]$. 
	
	Proof of Claim 2: We only need to show that the claim holds if the algorithm stops at step $T$ because otherwise it automatically holds (line 22). Since $m_{t+1}$ is the largest integer such that  $|\theta^L_t-\theta^U_t | \leq  \frac{1}{2m_{t+1}+1}\frac{\pi}{2}$ (line 24) and the algorithm requires $|\theta^L_t-\theta^U_t | \leq \frac{1}{K} \frac{1}{2m_t+1}\frac{\pi}{2}$, we have $2m_{t+1}+1 \geq K(2m_t+1)$. A simple induction argument shows  $2m_t+1 \geq K^t$ for $t=0,...,T$.
	Since  
	\begin{align*}
	T= \left \lceil \left. \log  \frac{\pi}{K \epsilon} \right / \log K \right \rceil,
	\end{align*}
	we have 
	\begin{align*}
	& | \theta_T^L-\theta_T^U | \leq  \frac{1}{K} \frac{1}{2m_t+1}\frac{\pi}{2}   \leq   \frac{1}{K^{T+1}}  \frac{\pi}{2}, \\
	& | p^L-p^U| = \left | (\sin \theta^L_T)^2- (\sin \theta^U_T)^2  \right | \leq 2 | \theta_T^L-\theta_T^U | \leq  \frac{\pi}{K^{T+1}}  \leq \epsilon.
	\end{align*}	
	
	Proof of Claim 3: We first show that $\hat{J}_t$ has an upper bound.
	That is, if  
	\begin{align*}
	j=\max \left( \left \lceil \frac{4}{ c_t^2 N_{\textnormal{shots}}} \log \left ( \frac{\pi^2 (T+1) }{3 \alpha} \right )   \right \rceil, \left \lceil \frac{64 }{c_t^4 N_{\textnormal{shots}}^2 } \right \rceil \right),
	\end{align*}
	where $c_t=\sin^2 \left ( \sqrt{\frac{r_t}{2}} \frac{1}{K} \frac{\pi}{2} \right )$, and $N_t=j N_{\textnormal{shots}},\delta_t=\eta_j$, then we will  show $|\theta^L_t-\theta^U_t | \leq \frac{1}{K} \frac{1}{2m_t+1}\frac{\pi}{2}$.
	Recall that (line 9)
	\begin{align*}
	\delta_t= \sqrt{\log \left ( \frac{\pi^2    (T+1)}{3\alpha } j^2 \right ) \frac{1}{2j N_{\textnormal{shots}} }}.
	\end{align*}
	It follows that 
	\begin{align*}
	\delta_t & \,\, = \sqrt{ \left ( \log \left ( \frac{\pi^2    (T+1)}{3\alpha }\right ) + 2 \log j \right )  \frac{1}{2j N_{\textnormal{shots}} }} \\
	& \,\,  \leq \sqrt{ \left ( \log \left ( \frac{\pi^2    (T+1)}{3\alpha }\right ) + 2 \sqrt{j} \right )  \frac{1}{2j N_{\textnormal{shots}} }} \\
	& \,\, \leq \sqrt{  2\max \left (\log \left ( \frac{\pi^2    (T+1)}{3\alpha }\right ) \frac{1}{2j N_{\textnormal{shots}} }, 2 \sqrt{j} \frac{1}{2j N_{\textnormal{shots}} }\right ) } \\
	& \,\, \leq \frac{1}{2} \sin^2 \left ( \sqrt{\frac{r_t}{2}} \frac{1}{K} \frac{\pi}{2} \right ).
	\end{align*}
	From Lemma \ref{thm:p_to_theta},
	\begin{align*}
	& \,\,\left |\arcsin \sqrt{L_t}-\arcsin \sqrt{U_t} \right | \leq \arcsin \sqrt{|L_t-U_t | } \leq \arcsin \sqrt{2\delta_t}\leq \sqrt {\frac{r_t}{2}} \frac{1}{K} \frac{\pi}{2} \\
	\Rightarrow & \,\, |\check{\theta}_{t}^L-\check{\theta}_{t}^U| =\frac{1}{2m_t+1} \left |\arcsin \sqrt{L_t}-\arcsin \sqrt{U_t} \right | \leq \sqrt {\frac{r_t}{2}} \frac{1}{K} \frac{1}{2m_t+1} \frac{\pi}{2} \\
	\Rightarrow & \,\, \left |\min \left (\check{\theta}_{t}^L,\arcsin \sqrt{r_t/2} \right)-\min \left (\check{\theta}_{t}^U,\arcsin \sqrt{r_t/2} \right) \right | \leq \sqrt {\frac{r_t}{2}}\frac{1}{K} \frac{1}{2m_t+1} \frac{\pi}{2}.
	\end{align*}
	From Lemma \ref{thm:stretch},
	\begin{align*}
	 & \,\, 	|\theta_t^L -\theta_t^U| = \left |\arcsin \sqrt{\sin^2 \left ( \min \left (\check{\theta}_{t}^L,\arcsin \sqrt{r_{t}/2} \right) \right) /r_t}-\arcsin \sqrt{\sin^2 \left ( \min \left (\check{\theta}_{t}^U,\arcsin \sqrt{r_{t}/2} \right) \right) /r_t} \right | \\
	& \leq \sqrt {\frac{2}{r_t}}  \left |\min \left (\check{\theta}_{t}^L,\arcsin \sqrt{r_t/2} \right)-\min \left (\check{\theta}_{t}^U,\arcsin \sqrt{r_t/2} \right) \right | \leq \frac{1}{K} \frac{1}{2m_t+1} \frac{\pi}{2}.
	\end{align*}
	
	We now give a bound for $1/r_t$.
	 When $r_t<1$,
	\begin{align}
	& \,\, r_t  = \left . \sin ^2 \left (\frac{\hat{k}_{t}+1}{2m_{t}+1} \frac{\pi}{2} \right ) \right / \sin^2 (\theta_{t-1}^U) \nonumber \\ 
	\Rightarrow & \,\, \frac{1}{r_t} = \frac{\sin^2 (\theta_{t-1}^U)}{\sin ^2 \left (\frac{\hat{k}_{t}+1}{2m_{t}+1} \frac{\pi}{2} \right ) } \leq \frac{\sin ^2 \left (\frac{\hat{k}_{t}+2}{2m_{t}+1} \frac{\pi}{2} \right )}{\sin ^2 \left (\frac{\hat{k}_{t}+1}{2m_{t}+1} \frac{\pi}{2} \right ) } \leq \frac{\sin ^2 \left (\frac{\hat{k}_{t}+2}{2m_{t}+1} \frac{\pi}{2} \right )}{\sin ^2 \left (\frac{\hat{k}_{t}/2+1}{2m_{t}+1} \frac{\pi}{2} \right ) } = \left (2 \cos \left (\frac{\hat{k}_{t}/2+1}{2m_{t}+1} \frac{\pi}{2} \right ) \right )^2 \leq 4. \label{bound_r}
	\end{align}
	Therefore, 
	\begin{align*}
	N_t = O \left ( \log \left ( \frac{\pi^2 (T+1) }{3 \alpha} \right ) \right ).
	\end{align*}
	We now bound $N_{\textnormal{oracle}}$. Since $\hat{T}$ is the smallest number  such that $|(\sin \theta_{\hat{T}}^L)^2-(\sin \theta_{\hat{T}}^U)^2| \leq \epsilon$, $|(\sin \theta_{\hat{T}-1}^L)^2-(\sin \theta_{\hat{T}-1}^U)^2| \geq \epsilon$. And recall $| \theta_{\hat{T}-1}^L-\theta_{\hat{T}-1}^U | \leq  \frac{1}{2m_{\hat{T}}+1} \frac{\pi}{2}$, which gives
	\begin{align*}
	\epsilon \leq |(\sin \theta_{\hat{T}-1}^L)^2-(\sin \theta_{\hat{T}-1}^U)^2| \leq 2 | \theta_{\hat{T}-1}^L-\theta_{\hat{T}-1}^U | \leq  \frac{ \pi}{2m_{\hat{T}}+1} \Rightarrow 2m_{\hat{T}}+1 \leq \frac{\pi}{\epsilon}.
	\end{align*}
	And since $2m_{t+1}+1 \geq K(2m_t+1)$, a simple induction argument shows 
	\begin{align*}
	2m_t+1 \leq \frac{1}{K^{\hat{T}-t}} \frac{\pi}{\epsilon}, \,\, t=0,...,\hat{T}.
	\end{align*}
	Finally,
	\begin{align*}
	N_{\textnormal{oracle}} & =  \sum_{t=0}^{\hat{T}} N_t m_t \leq O \left ( \log \left ( \frac{\pi^2 (T+1) }{3 \alpha} \right ) \right ) \sum_{t=0}^{\hat{T}}  (2m_t+1) \leq O \left ( \log \left ( \frac{\pi^2 (T+1) }{3 \alpha} \right ) \right )   \frac{\pi}{ \epsilon} \sum_{t=0}^{\hat{T}}  \frac{1}{K^t}\\
	& \leq O \left ( \log \left ( \frac{\pi^2 (T+1) }{3 \alpha} \right ) \right )  \frac{\pi}{ \epsilon} \sum_{t=0}^{\infty} \frac{1}{K^t}  = O \left ( \log \left ( \frac{\pi^2 (T+1) }{3 \alpha} \right ) \frac{1}{ \epsilon} \right ).
	\end{align*}
\end{proof}

\section{Numerical Experiments}\label{sec:numerical}
We compare through numerical experiments the proposed adaptive algorithm to two other algorithms, the maximum likelihood amplitude estimation (MLAE) and the iterative quantum amplitude estimation (IQAE). We use the MLAE and IQAE algorithms provided in Qiskit, an open source software development kit for quantum computing. For comparison purposes, we also use quantum simulators and circuits in Qiskit when implementing  the adaptive algorithm.  In all algorithms $X_t$ is sampled from a binomial distribution with probability $\sin^2 ( (2m_t+1)\theta)$ (with proper adjustment on $\theta$ in the adaptive algorithm). Therefore, the  time costs reported below reflect  the computation complexity of the classical part of the algorithms.  We choose the  confidence level as 95\% in all algorithms. For IQAE and the adaptive algorithm, we set the target precision as $\{10^{-3},10^{-4},...,10^{-10}\}$. Instead of specifying the target precision, MLAE requires an input of the number of iterations $T$ and chooses $\{m_t\}_{t=0}^{T}$  as $m_0=0,m_1=2^0,...,m_T=2^{T-1}.$   We use $T=8,10,12$ and $14$. Furthermore, we choose $K=3$ in the adaptive algorithm. 

We compare the algorithms in three scenarios. We choose $N_{\textnormal{shots}}=100$ in all algorithms in the first two scenarios. In the first scenario, we sample 100 values of $p$ uniformly from 0 to 0.5.  Each point in Panel (a) (b) and (c) of Figure \ref{fig:sim} is an average from the 100 experiments. The findings are summarized as follows. Firstly, from Panel (a) the adaptive algorithm requires slightly more oracle queries than MLAE and IQAE to achieve the same level of precision. MLAE uses the likelihood-ratio  method to construct the confidence interval, which lacks for rigorous justification.  When implementing IQAE, we chose the Clopper-Pearson method, which was not justified completely analytically (Supplementary information to \cite{grinko2021iterative}, Theorem 1). We attempted to conduct experiments using IQAE with the Chernoff-Hoeffding method, which gives more conservative but theoretically justifiable intervals and is in line with the choice in the adaptive method. But the Qiskit version of IQAE with the Chernoff-Hoeffding method using $N_{\textnormal{shots}}=100$ could not produce outcomes within a reasonable time. We will increase $N_{\textnormal{shots}}$ in the third scenario for comparison. 

Secondly, the time costs of the classical part of the adaptive algorithm are substantially less than MLAE and IQAE from Panel (b). By \cite{suzuki2020amplitude}, the computational complexity\footnote{Here we treat $\alpha$ as a constant.} of the classical part of MLAE  is $O(1/\epsilon \log (1/\epsilon))$, which is in line with Panel (b). Due to its high time costs, we will not compare  MLAE in the following scenarios. By contrast, the computational complexity of the classical part of the adaptive algorithm is $O(\log (1/\epsilon) \log(\log(1/\epsilon)))$ because $T$ is $O(\log (1/\epsilon))$ and the runtime in each step $t$ is proportional to the iterations in the \tb{repeat} loop, which is $O(\log(\log(1/\epsilon))$ by  Theorem \ref{thm:main}. The time costs of IQAE show a similar pattern but the average time cost is approximately 40 times of the adaptive algorithm. 

Thirdly, we report the average value of $r_t$ and the worst-case value, i.e., the smallest value across all steps. We gave a theoretical lower bound of $r_t$ as $1/4$ in \eqref{bound_r}. From Panel (c), the average value of $r_t$ is close to 1 and increases with the precision, which implies that the estimation loses very little efficiency due to the adjustment on average. 
  The worst-case value is between 0.6 and 0.7. 

Finally (not shown in the figure), 100\% of the intervals by the adaptive algorithm and IQAE contain the true values of $p$ in all experiments, and 98.5\% of the intervals by MLAE contain the true values of $p$. 

In the second scenario, we compare the adaptive algorithm and IQAE at a specific value $p=0.25$, which corresponds to $\theta=\pi/6$, a boundary between periods of $\sin^2(K^t \theta)$ for $K=3$. Each point in Panel (d) (e) and (f) of Figure \ref{fig:sim} is an average from the 100 experiments. As before, 100\% of the intervals by the adaptive algorithm and IQAE contain the true values of $p$ in all experiments. The gaps between the  numbers of  oracle queries  of the two methods becomes slightly larger.  But a more notable pattern is the rapid growth of the runtime of IQAE when $\epsilon$ is small. The bottleneck of IQAE is the sub-routine FINDNEXTK, which performs the following task (using the notation in this paper): recall that $m_{t+1}$ is the largest integer such that  $|\theta^L_t-\theta^U_t | \leq  \frac{1}{2m_{t+1}+1}\frac{\pi}{2}$. The sub-routine starts from  $m_{t+1}$ and gradually decreases this number until reach $\tilde{m}_{t+1}$ such that  $[\theta^L_t,\theta^U_t]$ is fully contained in a single length-$\frac{1}{2\tilde{m}_{t+1}+1}\frac{\pi}{2}$ period of $\sin^2((2\tilde{m}_{t+1}+1)\theta)$. In the worst-case scenario, the runtime of FINDNEXTK can be proportional to $m_{t+1}$ and eventually be $O(1/\epsilon)$, which is demonstrated in Panel (e).  Finally, the adjustment factor $r_t$, especially in the worst-case scenario, is smaller than the corresponding value in the previous simulation. That is because  $[\theta^L_t,\theta^U_t]$ is more likely to overlap with two periods since $\theta$ is at the boundary. 

\begin{figure}[!htp]
	\twoImages{sim_oracle}{8cm}{(a) Log-log plot of $N_\textnormal{oracles}$ for $p\sim U(0,0.5)$}{sim_oracle_25}{8cm}{(d) Log-log plot of $N_\textnormal{oracles}$ for $p=0.25$}
	\twoImages{sim_time}{8cm}{(b) Log-log plot of time costs for $p\sim U(0,0.5)$}{sim_time_25}{8cm}{(e) Log-log plot of time costs for $p=0.25$}
	\twoImages{sim_r}{8cm}{(c) Adjustment factor $r_t$ for $p\sim U(0,0.5)$}{sim_r_25}{8cm}{(f)  Adjustment factor $r_t$  for $p=0.25$}

	\caption{Comparison of MLAE, IQAE, and the adaptive algorithm using $N_{\textnormal{shots}}=100$. Confidence level $=95\%$.  Each point in Panel (a) (b) and (c) is an average of the experimental results for 100 values of $p$ sampled from $U(0,0.5)$. Each point in Panel (d) (e) and (f) is an average of 100 experimental results for $p=0.25$. The average $r_t$ refers to the average value of $r_t$ over $t=0,...,T$ and the worst-case $r_t$ refers to the minimum value of $r_t$ over $t=0,...,T$. }
	
	\label{fig:sim}
\end{figure}

In the third scenario, we compare the adaptive algorithm and IQAE with the Clopper-Pearson method ($\textnormal{IQAE}_{\textnormal{CP}}$) and the Chernoff-Hoeffding method ($\textnormal{IQAE}_{\textnormal{CH}}$).  We  use $N_{\textnormal{shots}}=800$ in all three methods for a fair comparison because IQAE with the Chernoff-Hoeffding method using a smaller $N_{\textnormal{shots}}$ sometimes could not return an output. The rest of the setup is identical to the first scenario. As aforementioned, the Chernoff-Hoeffding method gives a more conservative confidence interval but with a theoretical guarantee. The same method is also used in the adaptive algorithm. From Figure \ref{fig:sim2}, the number of oracle queries by the adaptive algorithm  is between $\textnormal{IQAE}_{\textnormal{CP}}$ and $\textnormal{IQAE}_{\textnormal{CH}}$ under the same level of precision. This suggests that the adaptive algorithm uses a slightly smaller number of queries than IQAE when using the same method for constructing confidence intervals of $\sin^2 ((2m_t+1)\theta)$. Moreover, the classical part of the adaptive algorithm has substantially lower computational complexity than IQAE as in the previous scenarios. Finally, 100\% of the intervals by the adaptive algorithm and both versions of IQAE contain the true values of $p$ in all experiments.

\begin{figure}[!htp]
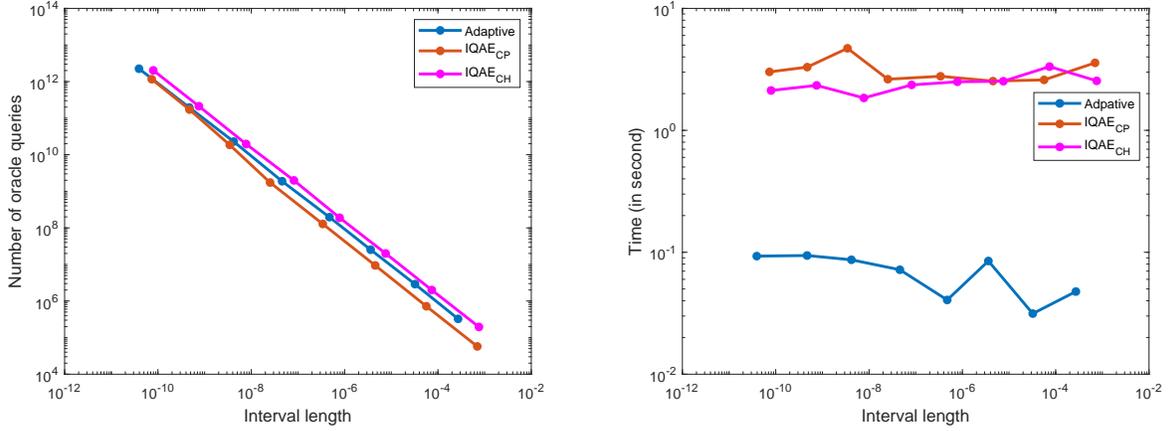

	\twoImages{sim_oracle_800}{8cm}{(a) Log-log plot of $N_\textnormal{oracles}$ for $p\sim U(0,0.5)$}{sim_time_800}{8cm}{(d) Log-log plot of time costs for $p\sim U(0,0.5)$}
	\caption{Comparison of MLAE, IQAE with the Clopper-Pearson method ($\textnormal{IQAE}_{\textnormal{CP}}$), and IQAE with the Chernoff-Hoeffding method ($\textnormal{IQAE}_{\textnormal{CH}}$) using $N_{\textnormal{shots}}=800$. Confidence level $=95\%$.  Each point  is an average of the experimental results for 100 values of $p$ sampled from $U(0,0.5)$.  }
		\label{fig:sim2}

\end{figure}

\section{Conclusion}\label{sec:conclusion}
We proposed a new  Grover-based amplitude estimation algorithm. The number of oracle queries achieves $O(1/\epsilon)$ and the computational complexity of the classical part achieves $O(\log(1/\epsilon))$, both up to a  double-logarithmic factor. The key ingredient of the algorithm is an adjustment factor $r_t$ such that the confidence interval for $\theta_t= \arcsin \sqrt{r_t p}$ is fully contained in a single period as long as the length of the original interval for $\theta$ does not exceed  the length of the period. With this adjustment, the algorithm does not need to search for the appropriate number of Grover iterations in each step, which can be time-consuming, and both the number of total steps and the number of measurements are easy to bound analytically. 

The theoretical result in this paper (Theorem \ref{thm:main}) is a non-asymptotic result in nature. In fact, such a non-asymptotic result is easier to formulate than an asymptotic result in this scenario because the number of measurements in a single step does not go to infinity. Therefore, a non-asymptotic bound such as Hoeffding's inequality can be naturally applied. But such a non-asymptotic bound can be loose. One may therefore be interested in  the asymptotic distribution of $N_{\textnormal{oracles}} (\hat{p}-p)$ where $\hat{p}$ is an estimator of $p$ such as the maximum likelihood estimator. A related but simpler problem is to derive the asymptotic variance of the estimator. 


\section*{Appendix}
   \begin{lemma}\label{thm:shorten}
	For $0\leq \theta_1,\theta_2 \leq \pi/2$, $0\leq r \leq 1$, 
	\begin{align*}
	|\arcsin \sqrt{r(\sin \theta_1)^2}-\arcsin \sqrt{r(\sin \theta_2)^2} |\leq  |\theta_1-\theta_2|.
	\end{align*}
\end{lemma}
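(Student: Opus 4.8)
The plan is to recognize both sides of the inequality as values of a single function and to show that this function is $1$-Lipschitz. Since $\theta_1,\theta_2\in[0,\pi/2]$ we have $\sin\theta_i\ge 0$, so $\sqrt{r(\sin\theta_i)^2}=\sqrt{r}\,\sin\theta_i$, and the claim becomes $|g(\theta_1)-g(\theta_2)|\le|\theta_1-\theta_2|$, where $g(\theta):=\arcsin(\sqrt{r}\,\sin\theta)$ for $\theta\in[0,\pi/2]$. Thus it suffices to prove that $g$ is $1$-Lipschitz on $[0,\pi/2]$.

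First I would establish differentiability on the open interval $(0,\pi/2)$ and compute
\[
g'(\theta)=\frac{\sqrt{r}\,\cos\theta}{\sqrt{1-r\sin^2\theta}}.
\]
The derivative is well defined there because the radicand $1-r\sin^2\theta$ is strictly positive on $(0,\pi/2)$: for $r\le1$ the equation $r\sin^2\theta=1$ would force $\sin\theta=1$, i.e.\ $\theta=\pi/2$, which is excluded. Since $\cos\theta\ge0$ on this range we have $g'(\theta)\ge0$, and the upper bound $g'(\theta)\le1$ is equivalent, after squaring, to $r\cos^2\theta\le 1-r\sin^2\theta$, i.e.\ to $r(\cos^2\theta+\sin^2\theta)=r\le1$, which holds by hypothesis. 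Hence $0\le g'(\theta)\le1$ throughout $(0,\pi/2)$.

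Finally I would invoke the mean value theorem. The function $g$ is continuous on the closed interval $[0,\pi/2]$ and differentiable on $(0,\pi/2)$, so for any $\theta_1,\theta_2$ in $[0,\pi/2]$ there is a point $\xi$ strictly between them with $g(\theta_1)-g(\theta_2)=g'(\xi)(\theta_1-\theta_2)$, which yields $|g(\theta_1)-g(\theta_2)|\le|\theta_1-\theta_2|$. The degenerate cases $r=0$ (where $g\equiv0$) and $\theta_1=\theta_2$ are immediate and need no separate argument.

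The only delicate point is the endpoint $\theta=\pi/2$ in the case $r=1$, where the formula for $g'$ takes the indeterminate form $0/0$ and the naive derivative expression breaks down. I expect this to be the main obstacle to a fully rigorous argument, but it is handled cleanly by confining the differentiability claim to the \emph{open} interval $(0,\pi/2)$ while using only continuity at the endpoints, which is precisely what the mean value theorem requires; one never needs to evaluate $g'$ at $\pi/2$.
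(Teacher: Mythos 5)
Your proof is correct, but it takes a genuinely different route from the paper's. You fix $r$ and show that $\theta \mapsto \arcsin(\sqrt{r}\,\sin\theta)$ is $1$-Lipschitz on $[0,\pi/2]$, bounding the derivative $g'(\theta)=\sqrt{r}\cos\theta\big/\sqrt{1-r\sin^2\theta}$ by $1$ (equivalent, after squaring, to $r\le 1$) and then invoking the mean value theorem, with explicit care about the endpoint $\theta=\pi/2$ when $r=1$. The paper instead fixes $\theta_1>\theta_2$ and studies the difference as a function of the \emph{parameter}: $g(r)=\arcsin\sqrt{r\sin^2\theta_1}-\arcsin\sqrt{r\sin^2\theta_2}$. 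Since $g(1)=\theta_1-\theta_2$ exactly, it suffices to show $g$ is non-decreasing in $r$, and the sign of $g'(r)$ reduces, after clearing square roots, to the trivial comparison $\sin^2\theta_1\ge\sin^2\theta_2$; no quantitative derivative bound is needed, only a sign. What each approach buys: the paper's argument is slightly shorter because equality at $r=1$ does all the work; your argument yields the sharper structural statement (a uniform Lipschitz constant in $\theta$) and is methodologically identical to the paper's own proof of Lemma \ref{thm:stretch}, so with your route the two companion lemmas would be handled by one uniform mean-value method. You also confront, and correctly resolve, an endpoint degeneracy that the paper's proof quietly shares: its $g'(r)$ likewise blows up at $r=0$ and, when $\theta_1=\pi/2$, at $r=1$, and is implicitly salvaged the same way, by monotonicity on the open interval plus continuity at the ends.
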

\begin{proof}
	Without loss of generality, assume $\theta_1 > \theta_2$. Let 
	\begin{align*}
	g(r)=\arcsin \sqrt{r(\sin \theta_1)^2}-\arcsin \sqrt{r(\sin \theta_2)^2}.
	\end{align*}
	Note that $g(1)=\theta_1-\theta_2$. Then we only need to prove $g(r)$ is a non-decreasing function. In fact,
	\begin{align*}
	& g'(r)= \frac{1}{\sqrt{1-r(\sin \theta_1)^2}} \sin \theta_1  \frac{1}{2\sqrt{r}}  -\frac{1}{\sqrt{1-r(\sin \theta_2)^2}} \sin \theta_2 \frac{1}{2\sqrt{r}} \geq 0 \\
	\Leftrightarrow & \,\, \frac{\sin \theta_1}{\sqrt{1-r(\sin \theta_1)^2}} \geq  \frac{\sin \theta_2}{\sqrt{1-r(\sin \theta_2)^2}} \\
	\Leftrightarrow & \,\, (\sin \theta_1)^2 (1-r(\sin \theta_2)^2) \geq (\sin \theta_2)^2 (1-r(\sin \theta_1)^2) \\
	\Leftrightarrow & \,\, (\sin \theta_1)^2 \geq  (\sin \theta_2)^2.
	\end{align*}
\end{proof}
 \begin{lemma}\label{thm:stretch}
	For $0\leq \theta_1,\theta_2 \leq \pi/2$, $s\geq 1$, satisfying $s(\sin \theta_1)^2\leq 1/2$ and $s(\sin \theta_2)^2\leq 1/2$,
	\begin{align*}
	|\arcsin \sqrt{s(\sin \theta_1)^2}-\arcsin \sqrt{s(\sin \theta_2)^2} |\leq  \sqrt{2s} |\theta_1-\theta_2|.
	\end{align*}
\end{lemma}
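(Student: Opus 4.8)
The plan is to rewrite the claim as a Lipschitz bound for a single-variable function and then reduce it to a pointwise derivative estimate through the mean value theorem. Since $\sqrt{s(\sin\theta)^2} = \sqrt{s}\,\sin\theta$ on $[0,\pi/2]$, I would fix $s$ and define
$$h(\theta) = \arcsin\bigl(\sqrt{s}\,\sin\theta\bigr),$$
so that the quantity to be bounded is exactly $|h(\theta_1) - h(\theta_2)|$. The goal then becomes showing $h'(\theta) \leq \sqrt{2s}$ wherever $h$ is differentiable on the relevant range, after which the mean value theorem delivers $|h(\theta_1) - h(\theta_2)| \leq \sqrt{2s}\,|\theta_1 - \theta_2|$ directly. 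This is the natural counterpart to Lemma \ref{thm:shorten}, except that there the parameter $r\leq 1$ was varied to exploit monotonicity in $r$, whereas here the $\sqrt{2s}$ on the right-hand side points to differentiating in $\theta$ instead.

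Next I would compute the derivative and bound it using the hypothesis $s(\sin\theta)^2 \leq 1/2$. A direct calculation gives
$$h'(\theta) = \frac{\sqrt{s}\,\cos\theta}{\sqrt{1 - s(\sin\theta)^2}}.$$
The numerator is bounded by $\sqrt{s}$ since $\cos\theta \leq 1$, and the constraint $s(\sin\theta)^2 \leq 1/2$ forces $1 - s(\sin\theta)^2 \geq 1/2$, so the denominator is at least $1/\sqrt{2}$. Combining these yields $0 \leq h'(\theta) \leq \sqrt{s}\big/(1/\sqrt{2}) = \sqrt{2s}$, which is precisely the slope bound needed. Note also that $s(\sin\theta)^2 \leq 1/2 < 1$ keeps the argument of $\arcsin$ strictly inside $(-1,1)$, so $h$ is genuinely differentiable and no boundary singularity of $\arcsin$ intervenes.

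The one point requiring care --- and the step I expect to be the main (if modest) obstacle --- is that the hypothesis $s(\sin\theta_i)^2 \leq 1/2$ is assumed only at the two endpoints, whereas the mean value theorem needs the derivative bound along the entire segment joining $\theta_1$ and $\theta_2$. I would resolve this by invoking the monotonicity of $\sin$ on $[0,\pi/2]$: assuming without loss of generality $\theta_2 \leq \theta_1$, any intermediate $\theta \in [\theta_2,\theta_1]$ satisfies $\sin\theta \leq \sin\theta_1$, hence $s(\sin\theta)^2 \leq s(\sin\theta_1)^2 \leq 1/2$. Thus the derivative bound holds on all of $[\theta_2,\theta_1]$, the mean value theorem applies, and the claimed inequality follows.
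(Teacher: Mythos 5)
Your proof is correct and follows essentially the same route as the paper's: the paper's one-line proof is precisely the mean value theorem applied to $\theta \mapsto \arcsin(\sqrt{s}\sin\theta)$, with the derivative at the intermediate point $\tilde{\theta}$ bounded by $\sqrt{2s}$ via $\cos\tilde{\theta}\leq 1$ and $1-s(\sin\tilde{\theta})^2\geq 1/2$. Your treatment is in fact more complete, since you explicitly justify (via monotonicity of $\sin$ on $[0,\pi/2]$) that the constraint $s(\sin\theta)^2\leq 1/2$ propagates from the endpoints to the intermediate point, a step the paper leaves implicit.
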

\begin{proof}
	\begin{align*}
	\left | \arcsin \sqrt{s(\sin \theta_1)^2}-\arcsin \sqrt{s(\sin \theta_2)^2} \right |  = \left | \frac{\sqrt{s} \cos(\tilde{\theta})}{\sqrt{1-s(\sin \tilde{\theta})^2}} (\theta_1-\theta_2) \right | \leq \sqrt{2s} |\theta_1-\theta_2|.
	\end{align*} 
\end{proof}
 \begin{lemma}\label{thm:p_to_theta}
	For $0\leq p_1,p_2\leq 1$, 
	\begin{align*}
	|\arcsin \sqrt{p_1}-\arcsin \sqrt{p_2}| \leq \arcsin{\sqrt{|p_1-p_2|}}.
	\end{align*}	
\end{lemma}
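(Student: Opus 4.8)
The plan is to reduce this to an elementary algebraic inequality in $p_1,p_2$ via the angle-subtraction formula for sine. Without loss of generality I would assume $p_1\ge p_2$ and set $\theta_1=\arcsin\sqrt{p_1}$, $\theta_2=\arcsin\sqrt{p_2}$, so that $\theta_1,\theta_2\in[0,\pi/2]$ with $\theta_1\ge\theta_2$. Then $\sin\theta_i=\sqrt{p_i}$ and $\cos\theta_i=\sqrt{1-p_i}$. The key observation is that both sides of the claimed inequality, namely $\theta_1-\theta_2$ and $\arcsin\sqrt{p_1-p_2}$, lie in $[0,\pi/2]$: the former because $\theta_1\le\pi/2$ and $\theta_2\ge0$, the latter because $p_1-p_2\in[0,1]$. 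Since $\sin$ is strictly increasing on $[0,\pi/2]$, the target $\theta_1-\theta_2\le\arcsin\sqrt{p_1-p_2}$ is equivalent to $\sin(\theta_1-\theta_2)\le\sqrt{p_1-p_2}$.

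Next I would expand the left-hand side using the subtraction formula,
\[
\sin(\theta_1-\theta_2)=\sqrt{p_1}\sqrt{1-p_2}-\sqrt{1-p_1}\sqrt{p_2},
\]
and note this quantity is nonnegative when $p_1\ge p_2$, since $p_1(1-p_2)\ge p_2(1-p_1)$. Both sides of $\sin(\theta_1-\theta_2)\le\sqrt{p_1-p_2}$ being nonnegative, I would square. After using $p_1(1-p_2)+p_2(1-p_1)=p_1+p_2-2p_1p_2$, the squared inequality collapses, after cancellation, to $p_2(1-p_1)\le\sqrt{p_1p_2(1-p_1)(1-p_2)}$. Squaring once more (both sides nonnegative) and dividing through by $p_2(1-p_1)$ reduces this to $p_2(1-p_1)\le p_1(1-p_2)$, i.e. $p_2\le p_1$, which holds by assumption.

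The computation is short and the only genuine care is bookkeeping rather than difficulty: I must verify each squaring step is legitimate by confirming nonnegativity, and separately dispose of the degenerate cases where the division by $p_2(1-p_1)$ fails, namely $p_2=0$ (where the inequality becomes the equality $\sqrt{p_1}=\sqrt{p_1}$) and $p_1=1$ (where it becomes $\sqrt{1-p_2}=\sqrt{1-p_2}$). The main conceptual step, and the one worth stating explicitly, is the monotonicity reduction in the first paragraph that converts the $\arcsin$ inequality into a bounded-domain $\sin$ comparison; once that is in place the remainder is routine algebra.
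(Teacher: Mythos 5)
Your proof is correct, but it follows a genuinely different route from the paper's. The paper argues by calculus: writing $\delta=p_1-p_2$, it studies $f(x)=\arcsin\sqrt{x+\delta}-\arcsin\sqrt{x}$ on $[0,1-\delta]$, locates the unique stationary point $x=\tfrac{1}{2}(1-\delta)$, uses the limits $f'(x)\to-\infty$ at the left endpoint and $f'(x)\to+\infty$ at the right endpoint together with the intermediate value theorem to pin down the sign of $f'$, and concludes that the maximum of $f$ is attained at the endpoints, where $f(0)=f(1-\delta)=\arcsin\sqrt{\delta}$. You instead reduce the claim, via strict monotonicity of $\sin$ on $[0,\pi/2]$, to $\sin(\theta_1-\theta_2)\le\sqrt{p_1-p_2}$, expand with the subtraction formula, and settle it by two squarings justified by nonnegativity checks. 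Your route is more elementary --- no derivatives, no stationary-point analysis, no intermediate value theorem --- and since every step is an equivalence it also reads off the equality cases ($p_1=p_2$, $p_2=0$, or $p_1=1$) directly; the paper's route instead buys a global picture of $f$ (decreasing then increasing, minimized at the midpoint), which is more than the lemma needs. One simplification available to you: the case split protecting the division by $p_2(1-p_1)$ is unnecessary, because when $p_2(1-p_1)=0$ the penultimate inequality $p_2(1-p_1)\le\sqrt{p_1p_2(1-p_1)(1-p_2)}$ holds trivially (left side zero, right side nonnegative), so you can stop one step earlier without treating $p_2=0$ and $p_1=1$ separately.
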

\begin{proof}
	Let $x=p_2$ and $\delta=p_1-p_2$. Without loss of generality, assume $0<\delta<1$. Consider the function $f(x)=\arcsin \sqrt{x+\delta} -\arcsin \sqrt{x}$. We only need to prove
	\begin{align*}
	\max_{x \in [0,1-\delta]} f(x) =f(0).
	\end{align*} 
	Notice 
	\begin{align*}
	f'(x)=\frac{1}{2\sqrt{x+\delta} \sqrt{1-(x+\delta)}} -\frac{1}{2\sqrt{x} \sqrt{1-x}}.
	\end{align*}
	The only stationary point of $f(x)$ on $[0,1-\delta]$ is $x=\frac{1}{2} (1-\delta)$. Moreover, $\lim_{x\rightarrow 0} f'(x)=-\infty$ and $\lim_{x\rightarrow 1-\delta} f'(x)=\infty$. By the intermediate value theorem, $f'(x)<0$ for $x \in \left (0,\frac{1}{2} (1-\delta) \right )$ and $f'(x)>0$ for $x \in \left (\frac{1}{2} (1-\delta),1-\delta \right )$. By the mean value theorem, for all $x \in \left (0, \frac{1}{2} (1-\delta) \right ]$, $f(x)-f(0)=x f'(\tilde{x})<0$ where $\tilde{x}\in (0,x)$. Similarly, for $x \in \left [  \frac{1}{2} (1-\delta),1-\delta \right )$, $f(1-\delta)-f(x)>0$. Therefore, the maximum value of $f(x)$ can only be achieved at the two endpoints. In fact, $f(0)=f(1-\delta)=\arcsin \sqrt{\delta}$.
\end{proof}

 \newcommand{\noop}[1]{}

\end{document}